\def\smallsetminus{\setminus}
\newcommand\cB{\mathcal B}
\newcommand\cE{\mathcal E}
\newcommand\cP{\mathcal P}
\newcommand\cF{\mathcal F}
\newcommand\cH{\mathcal H}
\newcommand{\cI}{\mathcal I}
\newcommand{\fA}{\mathfrak{A}}
\newcommand{\fS}{\mathfrak{S}}
\newcommand{\PG}{\mathrm{PG}}
\newcommand{\AG}{\mathrm{AG}}
\newcommand{\GF}{\mathrm{GF}}
\newcommand{\Tr}{\mathrm{Tr}\,}
\def\cC{{\mathcal C}}
\def\cV{{\mathcal V}}
\theoremstyle{thmstyleone}
\newtheorem{theorem}{Theorem}
\newtheorem{proposition}[theorem]{Proposition}
\theoremstyle{thmstyletwo}
\newtheorem{example}{Example}
\newtheorem{remark}{Remark}
\theoremstyle{thmstylethree}
\newtheorem{definition}{Definition}
\begin{document}

\title[Hypersurfaces,  minimal codes and secret sharing schemes]{Some hypersurfaces over finite fields,  minimal codes and secret sharing schemes}

\author*[1]{\fnm{Angela} \sur{Aguglia}}\email{angela.aguglia@poliba.it}
\equalcont{All authors contributed equally to this work.}

\author[1]{\fnm{Michela} \sur{Ceria}}\email{michela.ceria@poliba.it}
\equalcont{All authors contributed equally to this work.}

\author[2]{\fnm{Luca} \sur{Giuzzi}}\email{luca.giuzzi@unibs.it}
\equalcont{All authors contributed equally to this work.}

\affil*[1]{\orgdiv{Dipartimento di Meccanica, Matematica e Management}, \orgname{Politecnico di Bari}, \orgaddress{\street{Via Orabona 4}, \city{Bari}, \postcode{I-70125},  \country{Italy} }}

\affil[2]{\orgdiv{DICATAM}, \orgname{University of Brescia}, \orgaddress{\street{Via Branze 53}, \city{Brescia}, \postcode{ I-25123},  \country{Italy}}}

\abstract{Linear error-correcting codes can be used for constructing secret sharing schemes; however,finding in general the access structures of these secret sharing schemes  and, in particular, determining  efficient access structures is difficult.
Here we investigate the properties of  certain algebraic hypersurfaces over finite fields, whose intersection numbers with any hyperplane only takes a few values; these varieties give rise to $q$-divisible  linear codes with at most $5$ weights. Furthermore, for $q$ odd, these codes turn out to be minimal and we characterize the access structures of the secret sharing schemes based on their dual codes. Indeed, the secret sharing schemes thus obtained are democratic, that is each participant belongs to the same number of minimal access sets and can easily be described.}

\keywords{Algebraic variety, Hermitian variety, linear code, secret sharing scheme.}

 \pacs[MSC Classification]{94B05, 51A05, 51E21.}

\maketitle

\section{Introduction}
\label{sec:1}
A \emph{secret sharing scheme} (henceforth SSS for short) is a cryptographic technique for the management of the access to a
secret $s$  by a collective partnership.
The partners,  also called \emph{participants}, (to the scheme)  hold shares of information and  access is allowed only to certain qualified groups of them,
who can be given authorization by combining together their shares.
For any given SSS, a set of participants who can reconstruct the secret value $s$ from its shares is said to be an \emph{access set}.
Also, an access set is  \emph{minimal} if none of its proper subsets is in turn an access set itself.
The family of all the minimal access sets for a SSS is called the \emph{access structure} of the SSS.
In \cite{M92}, Massey devised a SSS based on linear codes and pointed out the relationship between
the minimal codewords of the dual code and the access structure.
 We refer to his work and \cite{DY03} for details
 on the construction and performance of these schemes.
 This provides a motivation
 for determining the set of all minimal codewords
 of an arbitrary linear code over a finite field. This problem turns out to be,
 in general, hard;
 however some useful  criteria might be obtained in the case of projective codes.
 Recently, several authors have considered this problem; see  e.g. \cite{DY03,YD06,LXL} and the references therein.

 In the present paper, we
 study certain algebraic hypersurfaces  in the  finite projective space $\PG(r,q^2)$ over  $\GF(q^2)$
 of dimension $r\geq 3$,
 whose intersection numbers with any hyperplane only take a few values.
 These hypersurfaces arise in the construction of quasi-Hermitian
 varieties and are, as such, also of independent interest; see~\cite{ACK}.
 Then we determine the  five weights of the corresponding   $q^2$-ary projective codes   proving that, except for $r=3$ and $q$ odd, these codes are all \emph{$q$-divisible}, that is their weights are  divisible by $q$; see \cite{W}.
 Finally, for $q$ odd, we show that these projective codes are minimal and  hence the related SSS's have an efficient access structures as they are democratic SSS's, namely, each participant is involved in the same number of minimal access sets. We also discuss how these access structures are
 related to the involved geometry.

The paper is organized into 6 sections.
Section~\ref{sec:2}  introduces the necessary background on quasi-Hermitian varieties and minimal codes. In Section~\ref{sec:3}, we exhibit an infinite family of $q^2$-ary minimal codes arising from quasi-Hermitian varieties.
In Section~\ref{sec:4}, we study the intersections of certain algebraic hypersurfaces of degree $2q$ over $\GF(q^2)$ with the extremal subspaces of $\PG(r,q^2)$ and,  as a byproduct,  in Section \ref{sec:5} we provide an infinite family of $q$-divisible  minimal codes.
In~\ref{sec:6} we consider the access structures arising from projective
codes and apply our results to the construction of infinite families of
SSS's,  which can be described algebraically.

Our main results are contained in Theorems \ref{fin}, \ref{thm:52} and \ref{thm:64}.

\section{Preliminaries}
\label{sec:2}
An $[n,r+1]_q$ \emph{projective system} is a collection  $\cV$ of $n$ not necessarily distinct points in the  projective space $\PG(r,q)$ over the finite field $\GF(q)$ of order $q$, with $q$ a prime power.
Fix a reference frame in $\PG(r, q)$,  with
homogeneous coordinates $(X_0,X_1,\dots, X_r)$,  and construct a matrix $G$ by taking as columns
the coordinates of the points of $\cV$,  suitably normalized. The code $\cC(\cV)$ having $G$ as generator
matrix is called \emph{the code determined from} $\cV$.

It is straightforward to see that, even if $\cC(\cV)$ is not
uniquely determined by $\cV$, all the codes that might be obtained
in this way are in fact equivalent; so we
shall often speak of \emph{the} projective code determined by $\cV$.

The spectrum of the intersections of $\cV$ with the hyperplanes of $\PG(r, q)$ is related with the list of the weights of
the associated code;
the {\em $k$-higher weights } of $\cC(\cV)$ are given by
\begin{multline*}
  d_k(\cC)=n-\max\{\mid \cV \cap \pi \mid : \pi \ \text{
      is a projective subspace of codimension} \\
    k \text{ in }\ \PG(r,q)\};
      \end{multline*}
 note  that the first higher weight $d_1(\cC(\cV))$ is actually the minimum distance of the code $\cC(\cV)$. We refer to \cite{TVN} for further details on this geometric approach to codes.

Error correcting codes can be used in order to devise \emph{access schemes}
or  SSS's.
In his  seminal work~\cite{M92}, Massey proposed the use of minimal
codewords in a \emph{dual code}, in order to specify the access structure
of a SSS.

\begin{definition}
  Let $\cC$ be a code of length $n$. For any codeword $c\in\cC$ and $1\leq i\leq n$
  the \emph{support} of $c$ is the set $\mathrm{supp}(c):=\{ i : c_i\neq 0 \}$
  of the positions of its non-zero components.
  Given $c,c'\in\cC$ we write that $c'\preccurlyeq c$ if
  $\mathrm{supp}(c')\subseteq\mathrm{supp}(c)$. We say that
  $c$ is a \emph{minimal word} of $\cC$ if $c'\preccurlyeq c$ implies
  that there is $\alpha\in\GF(q)$ such that
  $c'=\alpha c$.
\end{definition}
\begin{remark}
  According to our definition, a minimal codeword is just a word of $\cC$ which is
  not covered by any other linearly independent  codewords.
  From the point of view of
  the geometric description of the code, this is most convenient and
  this is consistent with the terminology of~\cite{LXL}.
  We point out that in~\cite{M92} for a codeword to be minimal
  it is also required that the leftmost non-zero
  component of the word must be $1$ (such codewords are called
  \emph{minimal AS-codewords} in \cite{LXL}).
  Accordingly, if $\cC$ is a linear code,  the minimal AS-codewords
  are exactly those minimal codewords of $\cC$ which lie in
  the affine space $\PG(\cC)\setminus \Sigma_{\infty}$, where $\Sigma_{\infty}$ is the hyperplane at infinity of equation $X_0=0$.
\end{remark}

It is well known that a linear code is spanned by its
minimal words and that all its minimum weight codewords
are minimal (according to our definition),
but little can be said about the remaining words in general.

\begin{definition}
 A linear code is a \emph{minimal  code}
 if all of its codewords are minimal.
\end{definition}
Ashikhmin and Barg in~\cite{AB} provided a sufficient
condition so that an $[n,k]$-linear
code $\cC$ over $\GF(q)$ is minimal, namely
\begin{equation}\label{suff}
  \frac{w_{\min}}{w_{\max}}>\frac{q-1}{q}, \end{equation}
where $w_{\min}$ and $w_{\max}$ are respectively the minimum and
maximum weight of non-zero codewords of $\cC$.

 Nice examples of linear codes can be obtained by considering some algebraic varieties of $\PG(r,q)$. In general, pointsets  with few intersection numbers with respect to the hyperplanes provide codes with interesting structure.

 Here we shall take into account
 \emph{quasi-Hermitian varieties} in $\PG(r,q^2)$,
 that is,  point sets  having the same  size and the same intersection numbers with respect to hyperplanes as a  non-singular Hermitian variety $\cH(r,q^2)$ of $\PG(r,q^2)$.
More in detail, the size of a quasi-Hermitian variety $\cV$  is \[\mid \cH(r,q^2)\mid =\frac{(q^{r+1}+(-1)^r)(q^{r}-(-1)^r)}{(q^2-1)},\]
whereas the intersection numbers of $\cV$ with respect to hyperplanes are \[\mid \cH(r-1,q^2)\mid =\frac{(q^r+(-1)^{r-1})(q^{r-1}-(-1)^{r-1})}{q^2-1},\]
 and
\[\mid \Pi_0 \cH(r-2,q^2)\mid =\frac{(q^r+(-1)^{r-1})(q^{r-1}-(-1)^{r-1})}{q^2-1}+(-1)^{r-1}q^{r-1},\]
 where  $\Pi_i$ is an $i$-dimensional space of $\PG(r,q^2)$  and $\Pi_0 \cH(r-2,q^2)$ is a cone, the join of the  vertex $\Pi_0$ to a non-singular Hermitian variety $\cH(r-2,q^2)$ of a projective subspace $\Pi_{r-2}$ which does not contain  $\Pi_0$.
A non-singular Hermitian variety of $\PG(r,q^2)$ is, by definition, a quasi-Hermitian variety: the \emph{classical quasi-Hermitian variety}.

\section{Projective minimal codes}
\label{sec:3}
In recent years \cite{BBG,BB21,BLT,TC} constructions of minimal
codes arising from projective systems have been considered.
In particular,  in \cite{TC} it has been proved that minimal projective codes
are equivalent to projective systems which are
\emph{cutting} $1$-blocking sets (see~\cite{BB21} for the definition).

Recall that a cutting $1$-blocking set (or, in brief a \emph{cutting} blocking
set) $\cH$ is a subset of $\PG(r,q)$ such that for any hyperplane
$\Sigma$ we have $\langle \Sigma\cap\cH\rangle=\Sigma$.

Let $\cH$ be a set of points of $\PG(r,q)$,  and $\cC(\cH)$ be one of its associated projective codes.
It is straightforward to see that the words of
a projective code determined by $\cH\subseteq\PG(V)$, where
$\langle\cH\rangle=\PG(V)$  and $V$ is the vector space underlying $\PG(r,q)$, correspond exactly to the evaluations of the
elements of the dual $V^*$ of $V$ on the given projective system.
The following result was proved in~\cite{ABN} and independently in~\cite{TC}.

\begin{theorem}
  \label{mwcw}
  Let $\cH$ be a set of $N$ points  of $\PG(r,q)$
  such that $\langle\cH\rangle=\PG(r,q)$.
  For each $i \in \{ 1 \dots N \}$  let $P_i\in\GF(q)^{r+1}$ be
  a fixed representative of a point $[P_i]\in\cH$ and denote by
  $\cC(\cH)$  the projective linear code having generator matrix whose columns
  are the vectors $P_i$. Then $\cC (\cH)$ is a minimal code
   if and only if for any hyperplane $\Sigma$ of $\PG(r,q)$
  \begin{equation}\label{cut} \langle \Sigma\cap\cH\rangle=\Sigma.
  \end{equation}

\end{theorem}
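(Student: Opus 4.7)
The plan is to translate the statement about codewords into projective-geometric language about hyperplanes, and then reduce the minimality condition to the spanning condition \eqref{cut}. Since $\langle\cH\rangle=\PG(r,q)$, the generator matrix $G$ has rank $r+1$, so the map $u\mapsto uG$ from $\GF(q)^{r+1}$ to $\cC(\cH)$ is a bijection. Identifying each $u$ with the linear functional $\phi_u\in V^*$ given by $\phi_u(v)=u\cdot v$, the $i$-th coordinate of the codeword $c_u=uG$ equals $\phi_u(P_i)$. Hence, writing $\Sigma_u$ for the hyperplane $\ker\phi_u$ when $u\neq 0$,
\[
\mathrm{supp}(c_u)=\{i:P_i\notin\Sigma_u\}=\{i:[P_i]\in\cH\setminus\Sigma_u\}.
\]

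Next I would reinterpret the two ingredients of the minimality definition under this dictionary. The covering relation $c_{u'}\preccurlyeq c_u$ translates, by taking complements inside $\cH$, to the reverse inclusion of the hyperplane traces: $\Sigma_u\cap\cH\subseteq\Sigma_{u'}\cap\cH$. The scalar-multiple condition $c_{u'}=\alpha c_u$ translates (for nonzero codewords) to $\phi_{u'}=\alpha\phi_u$, i.e.\ $\Sigma_{u'}=\Sigma_u$. Therefore $\cC(\cH)$ is a minimal code if and only if for every pair of hyperplanes $\Sigma,\Sigma'$ of $\PG(r,q)$,
\[
\Sigma\cap\cH\subseteq\Sigma'\cap\cH\ \Longrightarrow\ \Sigma=\Sigma'. \tag{$\ast$}
\]
(The zero codeword is trivially minimal, so nothing is lost.)

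It remains to show that ($\ast$) is equivalent to the cutting condition \eqref{cut}. For one direction, assume $\langle\Sigma\cap\cH\rangle=\Sigma$ for every hyperplane $\Sigma$. If $\Sigma\cap\cH\subseteq\Sigma'\cap\cH$, then $\Sigma=\langle\Sigma\cap\cH\rangle\subseteq\Sigma'$, and equality of projective dimensions forces $\Sigma=\Sigma'$. Conversely, assume ($\ast$) and suppose, for contradiction, that $U:=\langle\Sigma\cap\cH\rangle$ is a proper subspace of some hyperplane $\Sigma$. Then $\dim U\leq r-2$, so the pencil of hyperplanes through $U$ has at least $q+1\geq 2$ members; pick $\Sigma'\neq\Sigma$ in this pencil. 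Then $\Sigma\cap\cH\subseteq U\subseteq\Sigma'$, hence $\Sigma\cap\cH\subseteq\Sigma'\cap\cH$, contradicting ($\ast$).

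The only subtle point is the direction-reversal between the supports (which record points \emph{off} the hyperplane) and the hyperplane traces themselves; once that bookkeeping is in place, the argument is essentially a duality statement, so I do not expect any substantial obstacle.
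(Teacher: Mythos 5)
Your proof is correct and complete. Note that the paper does not give its own argument for Theorem~\ref{mwcw}: it is quoted from~\cite{ABN} and~\cite{TC}, and your duality argument (nonzero codewords $\leftrightarrow$ hyperplanes, support inclusion $\leftrightarrow$ reversed inclusion of hyperplane traces, minimality $\leftrightarrow$ the cutting condition via the pencil of hyperplanes through a proper subspace) is essentially the standard proof given in those references, so there is nothing to correct or to contrast with the paper's treatment.
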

A general problem is to determine when the set of points of an algebraic
variety $\cH$ in $\PG(r,q)$ is a cutting blocking set.
It is easy to see that elliptic quadrics in $\PG(3,q)$, as well as degenerate hypersurfaces,
in general, are not.
In \cite{BLT} the authors proved that if $\cH$ is a non-singular Hermitian variety in a given canonical form or a
quadric in  projective dimension $r\geq 4$, then $\cH$ is a  cutting blocking set and thus $\cC(\cH)$ is minimal.
We can easily extend the same result to any quasi-Hermitian variety of $\PG(r,q^2)$.
If $\cH$ is a quasi-Hermitian variety of $\PG(r,q^2)$ then $\cC(\cH)$ has $2$ weights. It is  straightforward to see that
Condition \eqref{suff} does not hold but Theorem~\ref{mwcw} can be applied as follows.

\begin{theorem}\label{herm}
  Let $\cH$ be a quasi-Hermitian variety in $\PG(r,q^2)$. Then,
  $\cC(\cH)$ is a minimal code.
\end{theorem}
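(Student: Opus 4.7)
The plan is to apply Theorem~\ref{mwcw}: it suffices to prove that $\langle\Sigma\cap\cH\rangle=\Sigma$ for every hyperplane $\Sigma$ of $\PG(r,q^2)$. I argue by contradiction. Suppose instead that $\langle\Sigma\cap\cH\rangle$ is strictly smaller than $\Sigma$; then $\Sigma\cap\cH$ is contained in some $(r-2)$-dimensional subspace $\Pi\subseteq\Sigma$. Because $\Pi\subseteq\Sigma$, the double inclusion $\Sigma\cap\cH\subseteq\Pi\cap\cH\subseteq\Sigma\cap\cH$ collapses to the equality $\Pi\cap\cH=\Sigma\cap\cH$, and the common cardinality $h$ must coincide with one of the two hyperplane intersection numbers $n_1=|\cH(r-1,q^2)|$ and $n_2=n_1+(-1)^{r-1}q^{r-1}$ allowed on a quasi-Hermitian variety.

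The next step is a double count over the pencil of $q^2+1$ hyperplanes $\Sigma'$ containing $\Pi$. Since each point of $\Pi\cap\cH$ belongs to every such $\Sigma'$ and every point of $\cH\setminus\Pi$ belongs to exactly one, one obtains
\[
\sum_{\Sigma'\supseteq\Pi}|\Sigma'\cap\cH|=N+q^2 h,\qquad N=|\cH|.
\]
Letting $x$ denote the number of hyperplanes in the pencil for which $|\Sigma'\cap\cH|=n_1$, the left-hand side equals $xn_1+(q^2+1-x)n_2$, and the identity becomes a linear equation in $x$. Using the explicit values
\[
N-n_1=q^{r-1}(q^r-(-1)^r),\qquad N-n_2=q^{2r-1},
\]
one solves to obtain $x=q^2+(-1)^r q^r$ in the case $h=n_1$ and $x=(-1)^r q^r$ in the case $h=n_2$. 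Since $r\geq 3$, in both sub-cases the value of $x$ falls outside the admissible range $\{0,1,\ldots,q^2+1\}$, producing the required contradiction.

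I do not expect serious obstacles. The Ashikhmin--Barg criterion~(\ref{suff}) is unavailable here because on a two-weight quasi-Hermitian code the ratio $w_{\min}/w_{\max}$ is too close to~$1$, so the proof has to lean on the combinatorics of the intersection numbers instead. The only mildly delicate points are the initial reduction (enlarging $\langle\Sigma\cap\cH\rangle$ to an $(r-2)$-dimensional subspace of $\Sigma$) and the elementary but slightly tedious simplification of $N-n_1$ and $N-n_2$; neither should cause difficulty, and the bound $r\ge 3$ is precisely what makes $q^r$ exceed $q^2+1$.
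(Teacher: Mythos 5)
Your proposal is correct, and it follows the paper's overall frame (reduce via Theorem~\ref{mwcw}, then suppose some hyperplane section $\Sigma\cap\cH$ lies in an $(r-2)$-dimensional subspace $\Pi\subseteq\Sigma$ and derive a contradiction), but the mechanism for the contradiction is genuinely different. The paper finishes in one line: since $\mid\Sigma\cap\cH\mid$ equals one of the two hyperplane intersection numbers of a Hermitian variety, it exceeds $\frac{q^{2r-2}-1}{q^2-1}$, the \emph{total} number of points of an $(r-2)$-dimensional subspace of $\PG(r,q^2)$, so the containment $\Sigma\cap\cH\subseteq\Pi$ is impossible on cardinality grounds alone. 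You instead run a double count over the pencil of the $q^2+1$ hyperplanes through $\Pi$; your identities check out ($N-n_1=q^{r-1}(q^r-(-1)^r)$, $N-n_2=q^{2r-1}$, and the resulting counts $x=q^2+(-1)^rq^r$ and $x=(-1)^rq^r$ are indeed negative for $r$ odd and exceed $q^2+1$ for $r$ even, using $r\geq 3$), so the argument is sound. What the paper's route buys is brevity and transparency: only the sizes of subspaces and the two intersection numbers are needed, with no pencil bookkeeping. What your route buys is that it never compares against the size of $\Pi$ itself, only uses the two admissible hyperplane numbers and the pencil structure, and as a by-product it also rules out $\cH$ being contained in a hyperplane (the spanning hypothesis of Theorem~\ref{mwcw}), which the paper disposes of separately; you are also right, as the paper notes, that the Ashikhmin--Barg bound \eqref{suff} cannot be used for this two-weight code. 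Both arguments implicitly require $r\geq 3$ (for $r=2$ tangent lines meet a unital in a single point and the statement fails), so your explicit use of that bound is consistent with the paper's setting.
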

\begin{proof}
  Since a quasi-Hermitian variety is a projective variety whose
  intersections with hyperplanes have the same cardinalities as
  the intersection of a Hermitian variety, we first see that
  $\cH$ cannot be contained in any hyperplane of $\PG(r,q^2)$.
  Also, suppose that there is a hyperplane $\varphi$ such
  that $\dim(\langle\cH\cap \varphi\rangle)<r-1$.
  This means that $\cH\cap\varphi\subseteq\Sigma$ with
  $\Sigma$ a projective space of dimension at most $r-2$.
  Thus, we would have
  \[ \mid \cH\cap \varphi\mid \leq\frac{q^{2r-2}-1}{q^2-1}, \]
  which is not possible. The thesis now follows from Theorem~\ref{mwcw}.

\end{proof}

\begin{remark}

  If we want to study codes arising from higher degree functions defined
over some algebraic varieties $\cH$, a convenient setting is to use Veronese embeddings
and represent these codes (in turn) as projective codes.
In particular, to investigate quadratic sections of $\cH$,  we just
apply the quadratic Veronese embedding
\[ \nu_r^2:\begin{cases}
    \PG(r,q)\to\PG(\frac{r^2+3r}{2},q) \\
    [(x_0,\dots,x_r)]\to [(x_0^2,x_0x_1,\dots,x_0x_r,x_1^2,\dots,x_r^2)]
  \end{cases} \]
and then consider Theorem~\ref{mwcw}.\\
We denote this new code,  arising from the projective system
of $\nu_r^2(\cH)$ as $\cC^2$.
Observe however, that in general it is not true that $\langle \nu_r^2(\cH) \rangle=\PG(\frac{r^2+3r}{2},q)$.
\end{remark}

\section{Hypersurfaces with  few intersection numbers}
\label{sec:4}
 In $\PG(r,q^2)$ with
homogeneous coordinates $(X_0,X_1,\dots, X_r)$, consider the affine
space $\AG(r,q^2)$ whose infinite hyperplane $\Sigma_{\infty}$ has
equation $X_0=0$. Then $\AG(r,q^2)$ has affine coordinates $(x_1,x_2,\dots,x_r)$
where $x_i=X_i/X_0$ for $i\in \{1,\dots,r\}$.
Consider the algebraic variety  $\cB$ of affine equation
 \begin{equation}\label{eqg}
x_r^q-x_r+\alpha^q(x_1^{2q}+\dots+x_{r-1}^{2q})-\alpha(x_1^2+\dots+x_{r-1}^2)=(\beta^q-\beta)(x_1^{q+1}+\dots+x_{r-1}^{q+1}),
\end{equation}
where $\alpha \in GF(q^2)^*$, $\beta\in GF(q^2)\setminus GF(q)$ and  the following conditions  are satisfied:
for odd $q$,
\begin{itemize}
\item[\rm(1)] $r$ is odd and $4\alpha^{q+1}+(\beta^q-\beta)^2 \neq 0$, or
\item [\rm(2)]$r$ is even and  $4\alpha^{q+1}+(\beta^q-\beta)^2$ is a non--square in $\GF(q)$;
\end{itemize}
 for  even $q>2$,
 \begin{itemize}
 \item [\rm(i)] $r$ is odd, or
\item[\rm(ii)]
 $r$ is even and $ \Tr(\alpha^{q+1}/(\beta^q+\beta)^2)=0$.
\end{itemize}

In~\cite{ACK} the authors proved that gluing together  the set of affine points of
$\cB$ with the degenerate Hermitian  variety \[ \cF=\{(0,X_1,\dots,X_r)\colon X_1^{q+1}+\dots+X_{r-1}^{q+1}=0\} \]
of the hyperplane $X_0=0$
gives a quasi-Hermitian variety $\cH=(\cB \cap AG(r,q^2))\cup \cF$.
In this section we study the intersection numbers of
$\cB$ with hyperplanes by
a ``surgery'' argument, that is
\begin{itemize}
\item first we consider the intersection of a hyperplane $\Sigma$ with
  the quasi-Hermitian variety $\cH$;
\item then we remove from this intersection its points at infinity  which are contained in $\cF$
  and add the possible intersections of $\Sigma$ with $\cB_{\infty}:=\cB \cap \Sigma_{\infty}$.
\end{itemize}
More concisely, our arguments are based on  these facts
\begin{enumerate}

\item $ \Sigma\cap\cB =((\Sigma\cap\cH)\setminus\cH_{\infty}) \cup  (\Sigma\cap\cB_{\infty})$;
\item  the cardinalities of $\Sigma \cap \cH$ are known;
\item  $\mid \cB \cap AG(r,q^2)\mid =\mid \cH \setminus \cH_{\infty}\mid =q^{2r-1}$.
\end{enumerate}
In particular $\mid \Sigma\cap\cB\mid =\mid \Sigma\cap\cH\mid -\mid \Sigma\cap\cH_{\infty}\mid +\mid \Sigma\cap\cB_{\infty}\mid $.
Observe that for $r=2$ $\cF=\{P_{\infty}\}$ and $\cB=\cH$ is a Buekenhout-Metz unital of $\PG(2,q^2)$,
see~\cite{GE}.
\subsection{Case  $r\geq 3$ and $q$ odd }
We compute the intersection numbers of $\cB$ with respect to any  hyperplane $\Sigma$ of $\PG(r,q^2)$, with $q$ odd.
The intersection between the algebraic variety  $\cB$ and $\Sigma_{\infty}$ is the degenerate quadric  $\cB_{\infty}$ of $\Sigma_{\infty}$ with equation \[x_1^2+\dots+x_{r-1}^2=0.\]
In this section $\cP_i$ will denote a parabolic quadric of an $i$-dimensional projective space $\Pi_i$ for $i$ even, whereas  $\cI_i$ and $\cE_i$ will denote a hyperbolic and an elliptic quadric of $\Pi_i$,  with $i$ odd. We set $\cI_{-1}=\emptyset$ and $\cH(-1,q^2)=\cH(0,q^2)=\emptyset$.

Assume that $ r\geq 3$ is odd.
In this case $\cB_{\infty}$ is a cone with vertex the point $P_{\infty}(0,0,\dots, 0,1)$ and basis a hyperbolic quadric $\cI_{r-2}$ of an $(r-2)$-dimensional projective space contained in  $\Sigma_{\infty}$,  hence
   $ \mid \cB_{\infty}\mid =q^2[(q^{2(r-2)}-1)/(q^2-1)+q^{r-3}]+1$ and  $\mid \cB\mid =q^{2r-1}+q^2[(q^{2(r-2)}-1)/(q^2-1)+q^{r-3}]+1.$

Let $\Sigma$ be a hyperplane passing through the point $P_{\infty}$. Then $\Sigma$ meets $\cB_{\infty}$ in a cone with vertex the point
$P_{\infty}$ and basis either a parabolic quadric   $\cP_{r-3}$ or a cone $\Pi_0 \cI_{r-4}$.
Now,  suppose that  the hyperplane $\Sigma$ does not contain $P_{\infty}$. We observe that $\Sigma$ meets $\cB_{\infty}$ in a hyperbolic quadric $\cI_{r-2}$ of an $(r-2)$-dimensional projective space contained in $\Sigma_{\infty}$.
Thus, the possible values of $\mid \Sigma\cap\cB\mid $, for any hyperplane $\Sigma $ of $\PG(r,q^2)$,
are:
\vskip.1cm
\begin{enumerate}[(C1)]
\item $\mid \cB_{\infty}\mid $;
\item $\mid \cH(r-1,q^2)\mid -\mid P_{\infty}\cH(r-3,q^2)\mid +\mid P_{\infty}\cP_{r-3}\mid $;
\item $\mid \cH(r-1,q^2)\mid -\mid P_{\infty}\cH(r-3,q^2)\mid +q^2\mid \Pi_0 \cI_{r-4}\mid +1$;
\item $\mid \Pi_0\cH(r-2,q^2)\mid -\mid P_{\infty}(\Pi_0\cH(r-4,q^2))\mid +q^2\mid \Pi_0 \cI_{r-4}\mid +1$;
\item  $\mid \Pi_0\cH(r-2,q^2)\mid -\mid P_{\infty}(\Pi_0\cH(r-4,q^2))\mid +\mid P_{\infty}\cP_{r-3}\mid $;
\item $\mid \cH(r-1,q^2)\mid -\mid \cH(r-2,q^2)\mid +\mid \cI_{r-2}\mid $;
\item $\mid \Pi_0\cH(r-2,q^2)\mid -\mid \cH(r-2,q^2)\mid +\mid \cI_{r-2}\mid $.
\end{enumerate}

In increasing order we get the following intersection numbers $n_i$ with $i=1,\dots, 5$:
  \newcommand{\onA}{q^2\frac{(q^{2(r-2)}-1)}{q^2-1}+q^{r-1}+1}
  \newcommand{\onB}{q^{2r-3}+ q^{r-1}+\frac{q^{2(r-2)}-q^2}{q^2-1}+1}
  \newcommand{\onC}{q^{2r-3}-q^{r-2}+q^{r-3}+\frac{q^{2(r-2)}-1}{q^2-1}}
  \newcommand{\onD}{q^{2r-3}+\frac{q^{2(r-2)}-q^2}{q^2-1}+1}
  \newcommand{\onE}{q^{2r-3}+q^{r-1}-q^{r-2}+q^{r-3}+\frac{q^{2(r-2)}-1}{q^2-1}}
  \begin{itemize}
\item (C1) gives $n_1=\onA$;
\item (C6) gives  $n_2=\onC$;
\item (C2) and (C5) yield $n_3=\onD$;
\item (C7) provides $n_4=\onE$;
\item (C3) and (C4) provide
  $n_5=\onB$.
\end{itemize}

Now suppose that  $r\geq 4$ is even. In this case  $\cB_{\infty}$ is a cone with vertex the point $P_{\infty}(0,0,\dots, 0,1)$ and basis a parabolic quadric $\cP_{r-2}$ in an $r-2$-dimensional projective space  contained in  $\Sigma_{\infty}$ and it contains $q^2[(q^{2(r-2)}-1)/(q^2-1)]+1$ points over $\GF(q^2)$.
As $\cB \cap AG(r,q^2)$ contains $q^{2r-1}$ affine points, we get
\[\mid \cB\mid =q^{2r-1}+q^2[(q^{2(r-2)}-1)/(q^2-1)]+1.\]
We observe that a generic hyperplane of $\Sigma$ which does not pass through $P_{\infty}$ meets $\cB_{\infty}$ in a parabolic quadric $\cP_{r-2}$ of an $r-2$-dimensional projective space in $\Sigma_{\infty}$.
On the other hand, if $\Sigma$ contains $P_{\infty}$ then it meets $\cB_{\infty}$ in a cone with vertex $P_{\infty}$ and basis either a cone $\Pi_0'\cP_{r-4}$, or a hyperbolic  quadric $\cI_{r-3}$, or an elliptic quadric $\cE_{r-3}$.
Thus,  the possible values of $\mid \Sigma\cap\cB\mid $, for any hyperplane $\Sigma $ of $\PG(r,q^2)$, are:
\vskip.1cm
\begin{enumerate}[(C1)]
\item $\mid \cB_{\infty}\mid $;
\item $\mid \cH(r-1,q^2)\mid -\mid P_{\infty}\cH(r-3,q^2)\mid +q^2\mid \Pi_0'\cP_{r-4}\mid +1$;
\item $\mid \cH(r-1,q^2)\mid -\mid P_{\infty}\cH(r-3,q^2)\mid +q^2\mid \cE_{r-3}\mid +1$;
\item $\mid \cH(r-1,q^2)\mid -\mid P_{\infty}\cH(r-3,q^2)\mid +q^2\mid \cI_{r-3}\mid +1$;
\item $\mid \Pi_0\cH(r-2,q^2)\mid -\mid P_{\infty}(\Pi_0\cH(r-4,q^2))\mid +q^2\mid \Pi_0'\cP_{r-4}\mid +1$;
\item $\mid \Pi_0\cH(r-2,q^2)\mid -\mid P_{\infty}(\Pi_0\cH(r-4,q^2))\mid +q^2\mid \cE_{r-3}\mid +1$;
\item $\mid \Pi_0\cH(r-2,q^2)\mid -\mid P_{\infty}(\Pi_0\cH(r-4,q^2))\mid +q^2\mid \cI_{r-3}\mid +1$;
\item $\mid \cH(r-1,q^2)\mid -\mid \cH(r-2,q^2)\mid +\mid \cP_{r-2}\mid $;
\item $\mid \Pi_0\cH(r-2,q^2)\mid -\mid \cH(r-2,q^2)\mid +\mid \cP_{r-2}\mid $.
\end{enumerate}

In increasing order,  we  obtain as the possible  intersection numbers  of $\cB$ with respect to the hyperplanes the following
  \newcommand{\enA}{q^2\frac{(q^{2(r-2)}-1)}{q^2-1}+1}
  \newcommand{\enB}{q^{2r-3}+\frac{(q^{2(r-2)}-q^2)}{q^2-1}+1}
  \newcommand{\enC}{q^{2r-3}-q^{r-1}+q^{r-2}+\frac{q^{2(r-2)}-1}{q^2-1}}
  \newcommand{\enD}{q^{2r-3}+\frac{q^{2(r-2)}-q^2}{q^2-1}-q^{r-2}+1}
  \newcommand{\enE}{q^{2r-3}+\frac{q^{2(r-2)}-q^2}{q^2-1}+q^{r-2}+1}
\begin{itemize}
\item (C1) gives $n_1=\enA$;
\item (C9) gives  $n_2=\enC$;
\item (C3) and (C6) yield $n_3=\enD$;
\item (C2) and (C5) provide $n_4=\enB$;
\item (C4), (C7) and (C8) provide $n_5=\enE$.
\end{itemize}
We summarize our results in the following theorem.

\begin{theorem}\label{qodd}
Suppose $q$ to be an odd prime power. Then the hypersurface
$\cB$ of $\PG(r,q^2)$, $r\geq 3$,  with equation \eqref{eqg} contains $q^{2r-1}+q^{r-1}+(q^{2(r-1)}-q^2)/(q^2-1)+1$  points if $r$ is odd or
$q^{2r-1}+(q^{2(r-1)}-q^2)/(q^2-1)+1$ points  if $r$ is even. Furthermore its possible intersection sizes with  hyperplanes are:
\begin{itemize}
\item for $r$ odd:
\[n_1= \onA,\qquad n_2= \onC, \]
\[n_3= \onD,\qquad n_4= \onE, \]
\[n_5= \onB; \]
\item for $r$ even:
\[ n_1= \enA,\qquad
n_2= \enC, \]
\[n_3= \enD,\qquad
n_4= \enB, \]
\[n_5= \enE. \]
\end{itemize}
\end{theorem}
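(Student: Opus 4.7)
The plan is to apply the ``surgery'' identity
\(|\Sigma\cap\cB|=|\Sigma\cap\cH|-|\Sigma\cap\cH_\infty|+|\Sigma\cap\cB_\infty|\)
stated at the opening of Section~\ref{sec:4}, where $\cH=(\cB\cap\AG(r,q^2))\cup\cF$ is the quasi-Hermitian variety constructed in~\cite{ACK} and $\cH_\infty=\cH\cap\Sigma_\infty=\cF$. Since the intersection cardinalities of a quasi-Hermitian variety with any hyperplane are already known, taking only the two values $|\cH(r-1,q^2)|$ and $|\Pi_0\cH(r-2,q^2)|$ recalled in Section~\ref{sec:2}, the whole problem reduces to a careful bookkeeping of the two correction terms $|\Sigma\cap\cH_\infty|$ and $|\Sigma\cap\cB_\infty|$, together with the global count $|\cB|=q^{2r-1}+|\cB_\infty|$ coming from fact~(3) in the text.

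First I would pin down the geometric nature of $\cB_\infty=\cB\cap\Sigma_\infty$, which by direct inspection of~\eqref{eqg} is the degenerate quadric $x_1^2+\cdots+x_{r-1}^2=0$. This is a cone with vertex $P_\infty=(0,\dots,0,1)$ over a non-degenerate quadric $Q$ in a complementary $\PG(r-2,q^2)$; since $-1$ is a square in $\GF(q^2)$ for $q$ odd, the form splits into hyperbolic pairs, forcing $Q=\cI_{r-2}$ when $r-1$ is even (i.e.\ $r$ odd) and $Q=\cP_{r-2}$ when $r-1$ is odd (i.e.\ $r$ even). The cardinality $|\cB_\infty|=q^2|Q|+1$ is immediate, and hence so is the claimed total count $|\cB|$.

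Next I would classify the intersections $\Sigma\cap\cB_\infty$ and $\Sigma\cap\cF$ by splitting on whether $P_\infty\in\Sigma$. If $P_\infty\notin\Sigma$, the trace $\Sigma\cap\Sigma_\infty$ is an $(r-2)$-space complementary to $P_\infty$, so it meets each cone in a faithful copy of its base: $\cI_{r-2}$ or $\cP_{r-2}$ for $\cB_\infty$, and $\cH(r-2,q^2)$ for $\cF$. If $P_\infty\in\Sigma$, both intersections become cones from $P_\infty$ over hyperplane sections of their bases: for $\cB_\infty$ this gives $P_\infty\cP_{r-3}$ or $P_\infty\Pi_0\cI_{r-4}$ in the odd case and $P_\infty\Pi_0'\cP_{r-4}$, $P_\infty\cE_{r-3}$, or $P_\infty\cI_{r-3}$ in the even case, while for $\cF$ it gives $P_\infty\cH(r-3,q^2)$ or $P_\infty\Pi_0\cH(r-4,q^2)$. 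Pairing these possibilities with the two possible values of $|\Sigma\cap\cH|$ via the surgery identity produces the seven configurations (C1)--(C7) for $r$ odd and the nine configurations (C1)--(C9) for $r$ even.

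The last step is the algebraic reconciliation. Substituting the explicit cardinalities of $\cH(i,q^2)$, $\Pi_0\cH(i,q^2)$, $\cI_i$, $\cE_i$, $\cP_i$ and the corresponding cones, one evaluates each of the seven (resp.\ nine) expressions and checks which coincide. The main obstacle is exactly this collapsing step: it must be verified that (C2) and (C5), and (C3) and (C4), collapse for $r$ odd, while (C3) and (C6), (C2) and (C5), and (C4), (C7), (C8) collapse for $r$ even, leaving precisely the five values $n_1,\dots,n_5$ stated. All such identifications are polynomial identities in $q$ and reduce after elementary simplification to a short list of checks—routine but error-prone. It is also here that one must invoke the hypotheses on $\alpha,\beta$, which ensure both that $\cH$ is a quasi-Hermitian variety in the sense of~\cite{ACK} (so that the two possible values of $|\Sigma\cap\cH|$ are correct) and that the base quadric at infinity has the asserted type.
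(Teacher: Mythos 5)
Your proposal is correct and follows essentially the same route as the paper: the surgery identity $|\Sigma\cap\cB|=|\Sigma\cap\cH|-|\Sigma\cap\cH_\infty|+|\Sigma\cap\cB_\infty|$, the identification of $\cB_\infty$ as a cone over $\cI_{r-2}$ (for $r$ odd) or $\cP_{r-2}$ (for $r$ even), the case split on whether $P_\infty\in\Sigma$, and the final evaluation and collapsing of the configurations (C1)--(C7), resp.\ (C1)--(C9), into the five values $n_1,\dots,n_5$. The only small inaccuracy is your closing remark that the hypotheses on $\alpha,\beta$ are needed to fix the type of the quadric at infinity: $\cB_\infty$ has equation $x_1^2+\cdots+x_{r-1}^2=0$, which involves neither $\alpha$ nor $\beta$, and its type is settled (as you yourself argue earlier) by $r$ and the fact that $-1$ is a square in $\GF(q^2)$; the conditions on $\alpha,\beta$ are needed only so that $\cH$ is quasi-Hermitian.
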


\subsection{Case $r\geq 3$  and $q$ even}

In this case, the intersection between the algebraic variety  $\cB$ and $\Sigma_{\infty}$ is the degenerate quadric  $\cB_{\infty}$ which consists of the single hyperplane of $\Sigma_{\infty}$:  $x_1+\dots+x_{r-1}=0$. Therefore
  the size of $\cB$ is:
  \[ q^{2r-1}+q^{2(r-2)}+\dots+q^2+1,\]
  and the possible  intersection numbers of $\cB$ with respect to hyperplanes of $\PG(r,q^2)$ are:
  \vskip.1cm
\begin{enumerate}[(C1)]
\item $\mid \cB_{\infty}\mid $;
\item $\mid \cH(r-1,q^2)\mid -\mid P_{\infty}\cH(r-3,q^2)\mid +\mid \Pi_{r-2}\mid $;
\item $\mid \cH(r-1,q^2)\mid -\mid P_{\infty}\cH(r-3,q^2)\mid +\mid \Pi_{r-3}\mid $;
\item $\mid \Pi_0\cH(r-2,q^2)\mid -\mid P_{\infty}(\Pi_0\cH(r-4,q^2)\mid +\mid \Pi_{r-2}\mid $;
\item $\mid \Pi_0\cH(r-2,q^2)\mid -\mid P_{\infty}(\Pi_0\cH(r-4,q^2)\mid +\mid \Pi_{r-3}\mid $;
\item $\mid \cH(r-1,q^2)\mid -\mid \cH(r-2,q^2)\mid +\mid \Pi_{r-3}\mid $;
\item $\mid \Pi_0\cH(r-2,q^2)\mid -\mid \cH(r-2,q^2)\mid +\mid \Pi_{r-3}\mid $.
\end{enumerate}
\vskip.1cm

So, for $r$ odd we obtain the following weights:
\begin{itemize}
\item (C1) gives $n_1=\frac{q^{2(r-1)}-1}{q^2-1}$;
\item (C6) gives  $n_2=q^{2r-3}-q^{r-2}+\frac{q^{2(r-2)}-1}{q^2-1}$;
\item (C3) and (C5) provide $n_3=q^{2r-3}+\frac{q^{2(r-2)}-1}{q^2-1}$;
\item (C7)  yields $n_4=q^{2r-3}+q^{r-1}-q^{r-2}+\frac{q^{2(r-2)}-1}{q^2-1}$;
\item (C2) and (C4) provide $n_5= q^{2r-3}+\frac{q^{2(r-1)}-1}{q^2-1}$.
\end{itemize}
For $r$ even we obtain
\begin{itemize}
\item (C1) gives $n_1=\frac{q^{2(r-1)}-1}{q^2-1}$;
\item (C7) gives  $n_2=q^{2r-3}-q^{r-1}+q^{r-2}+\frac{q^{2(r-2)}-1}{q^2-1}$;
\item (C3) and (C5) provide $n_3=q^{2r-3}+\frac{q^{2(r-2)}-1}{q^2-1}$;
\item (C6)  yields $n_4=q^{2r-3}+q^{r-2}+\frac{q^{2(r-2)}-1}{q^2-1}$;
\item (C2) and (C4) provide $n_5= q^{2r-3}+\frac{q^{2(r-1)}-1}{q^2-1}$.
\end{itemize}

 \begin{theorem}\label{qeven}
Suppose $q$ to be even and $r\geq 3$. Then the hypersurface $\cB$ of Equation~\eqref{eqg} has $q^{2r-1}+q^{2(r-2)}+\dots+q^2+1$ points in $\PG(r,q^2)$ and the following intersection sizes with respect to hyperplanes:
\begin{itemize}
\item for $r$ odd:
  \[n_1=\frac{q^{2(r-1)}-1}{q^2-1}, \qquad
    n_2=q^{2r-3}-q^{r-2}+\frac{q^{2(r-2)}-1}{q^2-1},\]
\[ n_3=q^{2r-3}+\frac{q^{2(r-2)}-1}{q^2-1}, \qquad
n_4=q^{2r-3}+q^{r-1}-q^{r-2}+\frac{q^{2(r-2)}-1}{q^2-1}, \]
\[n_5=q^{2r-3}+\frac{q^{2(r-1)}-1}{q^2-1};\]
\item for $r$ even:
\[n_1=\frac{q^{2(r-1)}-1}{q^2-1}, \qquad n_2=q^{2r-3}-q^{r-1}+q^{r-2}+\frac{q^{2(r-2)}-1}{q^2-1}, \]
\[n_3=q^{2r-3}+\frac{q^{2(r-2)}-1}{q^2-1},\qquad
  n_4=q^{2r-3}+q^{r-2}+\frac{q^{2(r-2)}-1}{q^2-1},\]
\[n_5=q^{2r-3}+\frac{q^{2(r-1)}-1}{q^2-1}.\]
\end{itemize}
\end{theorem}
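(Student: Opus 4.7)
The plan is to repeat the surgery argument of the previous subsection in the even characteristic setting, using the identity
\[ |\Sigma\cap\cB|=|\Sigma\cap\cH|-|\Sigma\cap\cH_{\infty}|+|\Sigma\cap\cB_{\infty}| \]
together with the known intersection cardinalities of the quasi-Hermitian variety $\cH$ with hyperplanes. The crucial structural simplification in characteristic two is that $x_1^2+\cdots+x_{r-1}^2=(x_1+\cdots+x_{r-1})^2$, so the equation cutting $\cB$ at infinity collapses to the single linear equation $x_1+\cdots+x_{r-1}=0$. Hence $\cB_{\infty}$ is a projective hyperplane of $\Sigma_{\infty}$, of size $(q^{2(r-1)}-1)/(q^2-1)$, independently of the parity of $r$. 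This yields at once
\[ |\cB|=q^{2r-1}+\frac{q^{2(r-1)}-1}{q^2-1}=q^{2r-1}+q^{2(r-2)}+\cdots+q^2+1. \]

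Next I would dispose of the hyperplane sections by the case analysis (C1)--(C7) collected just before the theorem statement. Concretely, for a hyperplane $\Sigma\neq\Sigma_{\infty}$, the term $|\Sigma\cap\cB_{\infty}|$ takes only two values, namely $|\Pi_{r-2}|$ when $\Sigma\cap\Sigma_{\infty}=\cB_{\infty}$ and $|\Pi_{r-3}|$ otherwise, since $\Sigma\cap\Sigma_{\infty}$ and $\cB_{\infty}$ are two hyperplanes of the $(r-1)$-dimensional space $\Sigma_{\infty}$; the term $|\Sigma\cap\cH|$ is one of the two quasi-Hermitian values $|\cH(r-1,q^2)|$ or $|\Pi_0\cH(r-2,q^2)|$; and the term $|\Sigma\cap\cH_{\infty}|$ depends only on whether $\Sigma$ passes through $P_{\infty}$, since $\cF$ is the Hermitian cone with vertex $P_{\infty}$ and base $\cH(r-2,q^2)$. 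The admissible combinations produce exactly the seven raw cardinalities of the list.

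The only delicate step is the identification needed to pass from seven cases to five weights: specifically one has to check that (C2) and (C4) produce the same value $n_5$, and that (C3) and (C5) produce the same value $n_3$. Both coincidences reduce to the single identity
\[ |\cH(r-1,q^2)|-|\Pi_0\cH(r-2,q^2)|=|P_{\infty}\cH(r-3,q^2)|-|P_{\infty}(\Pi_0\cH(r-4,q^2))|, \]
which in turn follows by expanding cones through $|P_{\infty}\cK|=q^2|\cK|+1$ and using the closed forms for $|\cH(i,q^2)|$ and $|\Pi_0\cH(i,q^2)|$ recalled in Section \ref{sec:2}: both sides equal $-(-1)^{r-1}q^{r-1}$. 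Once this collision is established, the remainder is a routine substitution of the closed-form expressions, the parity of $r$ affecting only the relative ordering of $n_2$ and $n_4$ through the sign $(-1)^{r-1}$; after sorting, the five surviving values match the two lists in the statement.
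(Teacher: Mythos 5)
Your proposal is correct and is essentially the paper's own argument: the collapse of $\cB_{\infty}$ to the single hyperplane $x_1+\cdots+x_{r-1}=0$ of $\Sigma_{\infty}$ in even characteristic, the surgery identity, the case list (C1)--(C7), and substitution of the closed-form cardinalities, with your explicit verification of the coincidences (C2)$=$(C4) and (C3)$=$(C5) via $|\cH(r-1,q^2)|-|\Pi_0\cH(r-2,q^2)|=q^2\bigl(|\cH(r-3,q^2)|-|\Pi_0\cH(r-4,q^2)|\bigr)=-(-1)^{r-1}q^{r-1}$ merely spelling out what the paper asserts by listing. The only minor slip is the claim that $|\Sigma\cap\cH_{\infty}|$ depends only on whether $P_{\infty}\in\Sigma$ (when it does, the section of the cone $\cF$ can be either $P_{\infty}\cH(r-3,q^2)$ or $P_{\infty}(\Pi_0\cH(r-4,q^2))$, so three values occur in all), but since you defer to the seven listed combinations, which do distinguish these, the computation is unaffected.
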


\subsection{Line sections of $\cB$ in $\PG(r,q^2)$}

Our aim  is to provide the spectrum of all possible intersection numbers
between $\cB$ and   a line   of $\PG(r,q^2)$. We are going to prove the following theorem.
\begin{theorem}\label{lines}
Let $\ell$ be a line of $\PG(r, q^2)$. Then, the possible sizes for $\ell \cap \cB$ are as follows
\[0,1,2,q-1, q,q+1,q+2, 2q-1,2q,q^2+1. \]
\end{theorem}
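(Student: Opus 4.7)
The plan is to parametrize an arbitrary line $\ell$ of $\PG(r,q^2)$ and compute $|\ell\cap\cB|$ by treating separately the case $\ell\subseteq\Sigma_\infty$ and the case of an affine $\ell$. For $\ell\subseteq\Sigma_\infty$, I would invoke the explicit description of $\cB_\infty=\cB\cap\Sigma_\infty$ obtained in Section~\ref{sec:4}: for $q$ odd, $\cB_\infty$ is a cone with vertex $P_\infty$ over a non-degenerate quadric of an $(r{-}2)$-dimensional subspace of $\Sigma_\infty$, whereas for $q$ even, $\cB_\infty$ coincides with a hyperplane of $\Sigma_\infty$. The standard classification of line sections of such cones and of hyperplanes then immediately yields $|\ell\cap\cB|\in\{0,1,2,q^2+1\}$, and each of these values lies in the list asserted by the theorem.

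For an affine line, the central move is the substitution $u:=x_r+\alpha(x_1^2+\dots+x_{r-1}^2)$. Setting $S_2:=x_1^2+\dots+x_{r-1}^2$ and $S_{q+1}:=x_1^{q+1}+\dots+x_{r-1}^{q+1}$, the identity $\alpha^q S_2^q=(\alpha S_2)^q$ allows us to rewrite the left-hand side of equation~\eqref{eqg} as $u^q-u$, so equation~\eqref{eqg} takes the concise form
\[
u^q-u=(\beta^q-\beta)\,S_{q+1}.
\]
Parametrizing $\ell$ by $x_i(t)=a_i+tb_i$ with $t\in\GF(q^2)$, $u(t)$ is a polynomial of degree at most $2$ in $t$, while $S_{q+1}(t)$ has degree at most $q+1$; therefore the affine points of $\ell\cap\cB$ correspond bijectively to the zeros in $\GF(q^2)$ of
\[
f(t):=u(t)^q-u(t)-(\beta^q-\beta)\,S_{q+1}(t),
\]
a polynomial in $t$ of degree at most $2q$.

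The crucial observation is that $f$ takes values in the one-dimensional $\GF(q)$-subspace $\theta\GF(q)$ of $\GF(q^2)$, where $\theta:=\beta^q-\beta$: indeed $S_{q+1}(t)\in\GF(q)$ since it is a sum of norms, while $u(t)^q-u(t)$ lies in the kernel of the trace $\GF(q^2)\to\GF(q)$, which is precisely $\theta\GF(q)$. Writing $t=t_0+t_1\theta$ with $t_0,t_1\in\GF(q)$ in odd characteristic (and using a generator $\omega$ of $\GF(q^2)/\GF(q)$ in place of $\theta$ in even characteristic), each monomial $t^{2q}$, $t^{q+1}$, $t^q$, $t^2$, $t$ appearing in $f(t)$ reduces to a polynomial of total degree at most $2$ in $(t_0,t_1)$, because $t_0^q=t_0$ and $t_1^q=t_1$. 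Hence $g(t_0,t_1):=f(t)/\theta$ is a polynomial in $\GF(q)[t_0,t_1]$ of total degree at most $2$, and the zeros of $f$ in $\GF(q^2)$ are in bijection with the $\GF(q)$-rational points of the (possibly degenerate) affine conic $g=0$ of $\AG(2,q)$. The classical classification of conics over a finite field then yields that this count lies in $\{0,1,q-1,q,q+1,2q-1,2q\}$ when $g\not\equiv 0$, and equals $q^2$ when $g\equiv 0$, the latter case being equivalent to $\ell\subseteq\cB$. Adding $0$ or $1$ for the point at infinity of $\ell$ produces the intersection sizes listed in the statement.

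The main obstacle is ruling out the value $2q+1$, which could in principle arise from the ``parallel-lines'' degeneration $g=(\lambda t_0+\mu t_1+\nu_1)(\lambda t_0+\mu t_1+\nu_2)$ with $\nu_1\neq\nu_2$ (yielding $2q$ affine intersections) combined with the point at infinity of $\ell$ lying on $\cB_\infty$. Excluding this combination requires comparing the explicit quadratic part of $g$, computed from the coefficients of $u(t)$ and from $\sum_{i<r}b_i^{q+1}$, with the precise condition ($\sum_{i<r}b_i^2=0$ in odd characteristic, $\sum_{i<r}b_i=0$ in even characteristic) for the direction $[0{:}b_1{:}\cdots{:}b_r]$ to belong to $\cB_\infty$, and showing that these two conditions cannot coexist. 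The twin boundary case of $q^2$ affine solutions with the point at infinity off $\cB$ is disposed of at once, because $\cB$ is projectively closed and therefore any line whose affine part is entirely contained in $\cB$ is itself contained in $\cB$, yielding $q^2+1$ intersections.
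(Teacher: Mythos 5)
Your overall strategy is sound and in fact runs parallel to the paper's: both arguments reduce the affine part of $\ell\cap\cB$ to counting $\GF(q)$-rational points of a (possibly degenerate) conic in the coordinates $(t_0,t_1)$ of $t\in\GF(q^2)$ over $\GF(q)$, and your reformulation of \eqref{eqg} as $u^q-u=(\beta^q-\beta)S_{q+1}$ with $u=x_r+\alpha(x_1^2+\dots+x_{r-1}^2)$, together with the observation that $f(t)$ takes values in the trace-zero line $\theta\GF(q)$, is a clean way to get there without the paper's normalization (the paper instead uses transitivity of the automorphism group on affine points to assume $\ell$ passes through the origin, which makes the conic's matrix explicit). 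The treatment of $\ell\subseteq\Sigma_\infty$ and the identification of the admissible conic counts $\{0,1,q-1,q,q+1,2q-1,2q,q^2\}$ are fine; the $g\equiv 0$ case should be justified by the degree bound $\deg f\le 2q<q^2$ (which your setup already provides), not by $\cB$ being ``projectively closed'' --- the Zariski closure of the $q^2$ rational affine points is just that finite set, so closedness alone does not put the whole line inside $\cB$.

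The genuine gap is the exclusion of $2q+1$: you correctly identify it as the main obstacle, but you only describe the computation that would be needed (``comparing the explicit quadratic part of $g$ \dots and showing that these two conditions cannot coexist'') without carrying it out, and this is precisely the crux of the theorem --- it is the part to which the paper devotes its central case analysis (when the point at infinity $R$ lies on $\cB_\infty$, the conditions $u=0$, $v+\varepsilon^2z=0$ force the conic to be an ellipse or to degenerate to a form with at most $q$ affine points, never a pair of distinct parallel rational lines). Without this step the value $2q+1$ is not ruled out and the stated list is not established. For what it is worth, your sketched route does close: if the direction satisfies $\sum_{i<r}b_i^2=0$ (resp. $\sum_{i<r}b_i=0$ for $q$ even), the degree-two part of $g$ collapses to a scalar multiple of the norm form, namely $-\bigl(\sum_{i<r}b_i^{q+1}\bigr)\,(t_0^2-\theta^2t_1^2)$ with $\theta^2$ a non-square of $\GF(q)$ in odd characteristic (and the analogous anisotropic form in even characteristic), hence it is either anisotropic or identically zero; in neither case can $g$ factor as two distinct parallel $\GF(q)$-rational lines, so the affine count is then confined to $\{0,1,q,q+1,q^2\}$ and adding the infinite point stays within the theorem's list. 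As submitted, however, this decisive verification is missing, so the proof is incomplete.
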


\begin{proof} Let us assume $q$ to be odd and  consider a line $\ell$  of $\PG(r,q^2)$. If $\ell$ is contained in $\Sigma_{\infty}$ then the possible  sizes of $\ell \cap \cB$ are $0$, $1$,$2$ or $q^2+1$.  Now suppose that $\ell \nsubseteq \Sigma_{\infty}$   and  $\mid \ell \cap \cB \mid \geq 1 $.
  From~\cite{ACK}, it can be directly
  seen that the collineation group
  of $\cB$ acts transitively on its affine points.
  Thus, we can assume that $\ell$ passes through the origin of the fixed reference system. We have to study the following system
\begin{equation}
    \left\{\begin{array}{l}\label{sis1}
     x_r^q-x_r+\alpha^q(x_1^{2q}+\dots+x_{r-1}^{2q})-\alpha(x_1^2+\dots+x_{r-1}^2)=\\
       (\beta^q-\beta)(x_1^{q+1}+\dots+x_{r-1}^{q+1}),\\
      x_1=m_1t\\
      x_2=m_2t\\
     \vdots\\
      x_r=m_rt,
      \end{array}\right.
  \end{equation}
\noindent where $t$ ranges over $\GF(q^2)$.\\
First we consider the case in which $m_r\neq 0$ and hence we can assume $m_r=1$.
In order to study System~\eqref{sis1},
choose a primitive element $\gamma$ of $\GF(q^2)$
and let $\varepsilon=\gamma^{(q+1)/2}$.
We now regard $\GF(q^2)$ as a vector space over $\GF(q)$ with a fixed
basis $\{1,\varepsilon\}$
and write the elements in $\GF(q^2)$ as linear
combinations with respect to this basis, that is,
$x_i=x_i^{(0)}+x_i^{(1)}\varepsilon$.
Then,
$\varepsilon^q=-\varepsilon$ and $\varepsilon^2$ is a primitive
element of $\GF(q)$. With this choice of $\varepsilon$,
setting
\[ u=m_1^{(0)} m_1^{(1)}+\dots+m_{r-1}^{(0)}m_{r-1}^{(1)}, \] \[ v=\left(m_1^{(0)}\right)^2+\dots+ \left(m_{r-1}^{(0)}\right)^2\quad \text{ and }
\quad  z=\left(m_1^{(1)}\right)^2+\dots+ \left(m_{r-1}^{(1)}\right)^2. \]
Equation~\eqref{sis1} gives
\begin{multline}
\label{eqodd1}
[2\alpha_0u+\alpha_1(\varepsilon^2z+v)+\beta_1(\varepsilon^2 z-v)]t_0^2+
\varepsilon^2 [2\alpha_0u+\alpha_1(v+\varepsilon^2z)+\beta_1(v-\varepsilon^2z)]t_1^2+\\
2[\alpha_0(\varepsilon^2z+v)+2\alpha_1\varepsilon^2u]t_0t_1+t_1=0
\end{multline}
The solutions $(t_0,t_1)$ of
\eqref{eqodd1}  can be viewed as the affine points of the (possibly degenerate)
 conic
$\Gamma$ of $\PG(2,q)$
associated to the symmetric $3\times 3$ matrix
\begin{small}
\[A=\begin{pmatrix}
      2\alpha_0u+\alpha_1(\varepsilon^2z+v)+\beta_1(\varepsilon^2 z-v) &\alpha_0(\varepsilon^2z+v)+2\alpha_1\varepsilon^2u  & 0\\

 \alpha_0(\varepsilon^2z+v)+2\alpha_1\varepsilon^2u  & \varepsilon^2[2\alpha_0u+\alpha_1(v+\varepsilon^2z)+\beta_1(v-\varepsilon^2z)] & 1/2 \\
  0 &  1/2 & 0\\
\end{pmatrix}. \]
\end{small}
The number of affine points of $\Gamma$ equals
the number of points in
$\AG(3,q^2)$ which lie in $\cB\cap\ell$.
Observe that $\mathrm{rank}(A)\geq  2$.
Let us first suppose $\det(A) \neq 0$. In this case $\Gamma$ is a non-degenerate conic in $\PG(2,q)$ and hence has either $q-1$ or $q$  or $q+1$  affine points.
If $\mathrm{rank}(A)=2$ then $\Gamma$ is the union of two distinct lines  either defined over $\GF(q)$ or defined over $\GF(q^2)$ and conjugate to each
other. This means that the number of  affine points of $\Gamma$ is  either $2q-1$ or $2q$ or $1$.
Thus if $\ell \cap \cB_{\infty}=\emptyset$ then $\mid \ell \cap \cB\mid \in \{1,q-1,q,q+1,2q-1,2q\}$.

Now suppose that $\ell$ meets $\mathcal{B}_{\infty}$.
The point at infinity of $\ell$ is $R=(0,m_1,m_2 \dots, m_r)$ and $R$ is also a point of $\cB_{\infty}$  if and only if
\[
 m_1^2+\dots+m_{r-1}^2=0,
\]
that is,  $\sum_{i=1}^{r-1}(m_i^{(0)}+\varepsilon m_i^{(1)})^2=0$.
This can be rewritten as
\[ \sum_{i=1}^{r-1}(m_i^{(0)})^{2}+\varepsilon^2(m_i^{(1)})^2 + 2 \varepsilon \sum_{i=1}^{r-1}m_i^{(0)}m_i^{(1)}=0, \] and hence we get
\[ \begin{cases}\displaystyle
    \sum_{i=1}^{r-1}(m_i^{(0)})^{2}+\varepsilon^2 (m_i^{(1)})^2=0 \\
    \displaystyle
    \sum_{i=1}^{r-1}m_i^{(0)}m_i^{(1)}=0.
\end{cases} \]
Thus  if $R \in \cB_{\infty}$ then  $v+\varepsilon^2z=0$ and  $u=0$. In this case $A$ becomes
 \[A=\begin{pmatrix}
  \beta_1(\varepsilon^2 z-v) &0  & 0\\
 0 & \varepsilon^2\beta_1(v-\varepsilon^2z) & 1/2 \\
  0 &  1/2 & 0\\
 \end{pmatrix}. \]
If $\det(A)\neq 0$ then $\Gamma$ is an ellipse as $\varepsilon^2$ is a non--square of $\GF(q)$.
In the case in which $\mathrm{rank}(A)=2$,  then we get $u=v=z=0$ and $\Gamma$ consists of $q$ affine points. Thus,  $\mid \ell \cap \cB\mid  \in \{q+1, q+2\}$.

Now  suppose that $m_r= 0$. In this case  the number of points in
$\AG(3,q^2)$ which lie in $\cB\cap\ell$ equals the number of affine points of the degenerate conic $\Gamma$ with associated matrix
\begin{small}
 \[A=\begin{pmatrix}
  2\alpha_0u+\alpha_1(\varepsilon^2z+v)+\beta_1(\varepsilon^2 z-v) &\alpha_0(\varepsilon^2z+v)+2\alpha_1\varepsilon^2u  & 0\\
 \alpha_0(\varepsilon^2z+v)+2\alpha_1\varepsilon^2u  & \varepsilon^2[2\alpha_0u+\alpha_1(v+\varepsilon^2z)+\beta_1(v-\varepsilon^2z)] & 0 \\
  0 &  0 & 0\\
\end{pmatrix}. \]
\end{small}
If $\mathrm{rank}(A)=2$ then $\Gamma$ has either $1$ or $2q-1$ or $2q$ points. Otherwise, $\mathrm{rank}(A)=1$ and $\Gamma$ consists of $q$ points,  or the matrix $A$ is the null matrix, namely  $\Gamma$ is the entire affine plane and $\ell \subset \cB$.
Furthermore, in the case in which $\mid \ell \cap \cB_{\infty}\mid  =1$,  it is easy to see that  $\Gamma \cap \AG(2,q^2)$  consists of a single point over $\GF(q)$ or it is the entire plane.
Hence  $\mid \ell \cap \cB\mid \in \{1,2,q,2q-1, 2q,  q^2+1\}$ and our theorem follows for $q$ odd.

Let us consider the case of $q$ even. As $q>2$, we can fix a basis for $\GF(q^2)$ over $\GF(q)$ as $\{1,\epsilon\}$, with $\epsilon\in \GF(q^2)\setminus \GF(q)$  such that
$\epsilon^2+\epsilon +\nu =0$, for some $\nu \in \GF(q)\setminus \{1\}$,
 with $\Tr_q(\nu)=1$.
Then $\epsilon^{2q}+\epsilon^q+\nu =0$ and hence
$(\epsilon^q+\epsilon)^2+(\epsilon^q + \epsilon)=0$, leading to $\epsilon^q+\epsilon +1 =0$.
With this choice of $\varepsilon$, setting as before $u=m_1^{(0)} m_1^{(1)}+\dots+m_{r-1}^{(0)} m_{r-1}^{(1)}$, $v=(m_1^{(0)})^2+\dots+ (m_{r-1}^{(0)})^2$ and $z=(m_1^{(1)})^2+\dots+ (m_{r-1}^{(1)})^2$,  \eqref{sis1}
gives
\begin{multline}
\left[\beta_1 (u+v+\nu z)+\alpha_1 (v+z+\nu z) +\alpha_0 z\right]t_0^2\\
+\left[(\beta_1 \nu (u+v+\nu z)+ \alpha_1 \nu (v+z+\nu z) +(\alpha_1 +\alpha_0) (v+z)\right]t_1^2 \\
+ \beta_1 (u+v+\nu z)t_0t_1 +t_1=0.
\end{multline}
which can be viewed  again as the equation of a conic $\Gamma $ of $\AG(2,q^2)$.

It is straightforward to see that $\mid \Gamma \cap \AG(2,q^2)\mid \in \{1,q-1, q,q+1, 2q-1,2q, q^2 \}$.
Arguing as in the $q$ odd case,  the proof is completed.

\end{proof}

\section{Codes with $5$ weights}
\label{sec:5}
We are going to
determine the parameters of the projective code generated from the hypersurface $\cB$ of Equation~\eqref{eqg} and in particular its weight enumerator for $r=3$ and $q$ odd.

\begin{theorem}\label{fin}Let $q$ be an odd prime power. Then, the points of $\cB$ in $\PG(r,q^2)$, $r>3$ determine a $q$-divisible minimal projective code $\cC(\cB)$  of length  $N=q^{2r-1}+q^{r-1}+(q^{2(r-1)}-q^2)/(q^2-1)+1$ for $r$ odd, or
  $N =q^{2r-1}+(q^{2(r-1)}-q^2)/(q^2-1)+1 $ for $r$ even,  dimension $r+1$
  and  non-zero weights:
  \begin{itemize}
\item for  $r$ odd:
  \[ w_5=q^{2r-1}-q^{2r-3}+q^{2(r-2)},\]
  \[
  w_4=q^{2r-1}-q^{2r-3}+q^{2(r-2)}+q^{r-2}-q^{r-3}, \]
  \[ w_3=q^{2r-1}-q^{2r-3}+q^{2(r-2)}+q^{r-1}, \]
  \[ w_2=q^{2r-1}-q^{2r-3}+q^{2(r-2)}+q^{r-1}+q^{r-2}-q^{r-3},
    \qquad w_1=q^{2r-1}; \]
\item for $r$ even:
\[
  w_5=q^{2r-1}-q^{2r-3}+q^{2(r-2)}-q^{r-2},\qquad
 w_4=q^{2r-1}-q^{2r-3}+q^{2(r-2)}, \]
\[ w_3=q^{2r-1}-q^{2r-3}+q^{2(r-2)}+q^{r-2}, \]
\vskip-.1cm
\[ w_2=q^{2r-1}-q^{2r-3}+q^{2(r-2)}+q^{r-1}-q^{r-2},
  \qquad
w_1=q^{2r-1}. \]
\end{itemize}
\end{theorem}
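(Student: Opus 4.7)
My plan is to extract every part of the statement (length, dimension, the five weights, the $q$-divisibility, and the minimality) directly from the hyperplane section data of Theorem~\ref{qodd}, combined with Theorem~\ref{mwcw} for the cutting blocking property. The length is by definition $N=|\cB|$, which is the count already provided by Theorem~\ref{qodd}. For the dimension, since every intersection number $n_i$ in Theorem~\ref{qodd} satisfies $n_i<|\cB|$, no hyperplane of $\PG(r,q^2)$ contains $\cB$, so the columns of the generator matrix span a space of full rank $r+1$.

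For the weights I would use the standard projective-system-to-code dictionary: each non-zero codeword of $\cC(\cB)$ corresponds to a hyperplane $\Sigma$, and its weight equals $N-|\Sigma\cap\cB|$. Substituting the five values $n_1,\dots,n_5$ of Theorem~\ref{qodd} and telescoping via the identity
\[
\frac{q^{2(r-1)}-q^{2(r-2)}}{q^2-1}=q^{2(r-2)}
\]
yields exactly the listed weights in both parities of $r$; the remaining subtractions are entirely analogous to the computation of $w_5=q^{2r-1}-q^{2r-3}+q^{2(r-2)}$ and are purely routine. The $q$-divisibility is then visible by inspection: the smallest power of $q$ appearing in any $w_i$ is $q^{r-3}$ for $r$ odd (realised in $w_2$ and $w_4$) and $q^{r-2}$ for $r$ even (realised in $w_2,w_3,w_5$), and both exponents are at least $1$ precisely under the hypothesis $r>3$ (which forces $r\geq 5$ in the odd case and $r\geq 4$ in the even one). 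This is exactly the place where $r=3$ has to be excluded.

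The step I expect to be the main obstacle is the \emph{minimality} assertion. By Theorem~\ref{mwcw} it suffices to prove that $\cB$ is a cutting blocking set, namely that $\langle\Sigma\cap\cB\rangle=\Sigma$ for every hyperplane $\Sigma$. The uniform strategy is to compare $|\Sigma\cap\cB|$ with $(q^{2(r-1)}-1)/(q^2-1)$, the maximum number of points in an $(r-2)$-dimensional subspace of $\PG(r,q^2)$: whenever $n_i$ strictly exceeds this bound, $\Sigma\cap\cB$ cannot fit inside any codimension-$2$ subspace and must therefore span $\Sigma$. Elementary arithmetic shows the inequality is strict for every $n_i$ in Theorem~\ref{qodd} with one single exception, namely $n_1$ in the $r$ even case, where
\[
n_1=\frac{q^2(q^{2(r-2)}-1)}{q^2-1}+1=\frac{q^{2(r-1)}-1}{q^2-1}
\]
matches the bound exactly. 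Inspecting the case list of Theorem~\ref{qodd} shows that this value is realised only by $\Sigma=\Sigma_\infty$; there, $\Sigma\cap\cB=\cB_\infty$ is the cone with vertex $P_\infty$ over a non-degenerate parabolic quadric $\cP_{r-2}$ in an $(r-2)$-dimensional subspace of $\Sigma_\infty$, and since $\cP_{r-2}$ spans its ambient subspace while $P_\infty$ lies outside it, $\cB_\infty$ spans $\Sigma_\infty$. Isolating and handling this single degenerate configuration is the delicate point of the argument; once this is done the cutting blocking property holds uniformly, and Theorem~\ref{mwcw} delivers the minimality of $\cC(\cB)$.
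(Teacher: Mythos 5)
Your determination of the length, dimension, weights and $q$-divisibility coincides with the paper's (everything reduces to $w_i=N-n_i$ with the $n_i$ of Theorem~\ref{qodd}, and your telescoping identity is exactly the computation used there), but your minimality argument takes a genuinely different route. The paper does not invoke Theorem~\ref{mwcw} for this theorem at all: it simply checks the Ashikhmin--Barg condition \eqref{suff} over $\GF(q^2)$, namely $w_5/w_1=1-q^{-2}+q^{-3}>1-q^{-2}$ for $r$ odd (and, with an extra $-q^{-(r+1)}$ term, the analogous inequality for $r$ even), a one-line numerical verification from the same weight list. You instead prove that $\cB$ is a cutting blocking set by comparing each $n_i$ with $(q^{2(r-1)}-1)/(q^2-1)$, the size of a $\PG(r-2,q^2)$, and treating separately the unique case of equality, which occurs only for $\Sigma=\Sigma_\infty$ when $r$ is even (correct: every other hyperplane has size in $\{n_2,\dots,n_5\}$, all strictly above the bound), where $\cB_\infty$ is a cone over a nondegenerate parabolic quadric and hence spans $\Sigma_\infty$. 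This is precisely the style of argument the paper reserves for quasi-Hermitian varieties in Theorem~\ref{herm}, where condition \eqref{suff} fails. Your route costs the extra care at the equality case and the cardinality estimates, but it buys the stronger geometric conclusion that $\cB$ is a cutting blocking set and treats both parities uniformly, whereas the paper's check is shorter (it even restricts explicitly to $r$ odd, leaving the even case as analogous) but rests on a merely sufficient criterion. One cosmetic point: the hypothesis $r>3$ is really only needed for the divisibility in the odd case ($r=3$ gives the term $q-1$ in $w_4$); for even $r$ the exponent $r-2\geq 1$ is automatic. Both proofs are correct.
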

\begin{proof}
  Since $w_i=N-n_i$ where the $n_i$'s are the intersection numbers of $\cB$ with respect to the hyperplanes of $\PG(r,q^2)$,
from Theorem \eqref{qodd} we have just to prove that $\cC(\cB)$ is a minimal code. We restrict ourselves to the case  $r$  odd. Under this hypothesis
the maximal weight  of $\cC(\cB)$ is $w_1=q^{2r-1}$ whereas the minimal one is $w_5=q^{2r-1}-q^{2r-3}+q^{2(r-2)}$. We observe that
$\frac{w_5}{w_1}>\frac{q^2-1}{q^2}$,  that is,  Condition~\eqref{suff}  is satisfied and  hence $C(\cB)$ is a minimal code.
\end{proof}

From Theorem~\eqref{qeven} we obtain the following.
\begin{theorem}
\label{thm:52}
  Let $q$ be an even prime power.
Then, the points of $\cB$ in $PG(r,q^2)$, $r\geq 3$ determine a  $q$-divisible projective code $\cC(\cB)$  of length  $N=q^{2r-1}+q^{2(r-2)}+q^{2(r-3)}+\dots+q^2+1$, dimension $r+1$
and non-zero weights:
\begin{itemize}
\item for $r$ odd:
  \[
    w_5= q^{2r-1}-q^{2r-3},\qquad
    w_4= q^{2r-1}-q^{2r-3}+q^{2r-4}-q^{r-1}+q^{r-2}, \]
\[ w_3=q^{2r-1}-q^{2r-3}+q^{2(r-2)}, \]
\[
w_2=q^{2r-1}-q^{2r-3}+q^{2(r-2)}+q^{r-2},\qquad
  w_1=q^{2r-1}; \]
\item for $r$ even:
\[ w_5= q^{2r-1}-q^{2r-3},\qquad
  w_4= q^{2r-1}-q^{2r-3}+q^{2(r-2)}-q^{r-2}, \]
\[ w_3=q^{2r-1}-q^{2r-3}+q^{2(r-2)}, \]
\[
w_2= q^{2r-1}-q^{2r-3}+q^{2(r-2)}+q^{r-1}-q^{r-2},\qquad
w_1=q^{2r-1}. \]
\end{itemize}
\end{theorem}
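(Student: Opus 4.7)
My plan is to deduce Theorem~\ref{thm:52} directly from Theorem~\ref{qeven}, since for a projective code generated by a point set $\cB\subset\PG(r,q^2)$ the codeword associated with a linear form $f$ has weight $|\cB|-|\cB\cap\{f=0\}|$. Thus, once the length $N=|\cB|$ and the spectrum of hyperplane intersection sizes $n_1<n_2<\dots<n_5$ are known, the non-zero weights are simply $w_i=N-n_i$. Both pieces of data are furnished by Theorem~\ref{qeven} (even $q$, $r\ge 3$), so the bulk of the work has already been carried out there; it remains to record the arithmetic, to verify that the dimension of $\cC(\cB)$ is indeed $r+1$, and to inspect $q$-divisibility.

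For the dimension, I would argue exactly as in the proof of Theorem~\ref{herm}: if $\cB$ were contained in a hyperplane $\pi$, then for any other hyperplane $\pi'$ the intersection $\cB\cap\pi'\subset\pi\cap\pi'$ would be bounded by $(q^{2r-2}-1)/(q^2-1)$, which contradicts each of the larger values $n_2,\dots,n_5$ appearing in Theorem~\ref{qeven}. Hence $\langle\cB\rangle=\PG(r,q^2)$ and the generator matrix has rank $r+1$, giving $\dim\cC(\cB)=r+1$. The length statement is then immediate from the count $|\cB|=q^{2r-1}+q^{2(r-2)}+\dots+q^2+1$ obtained in Theorem~\ref{qeven}.

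For the weights, I would simply carry out the subtractions $w_i=N-n_i$ separately in the odd-$r$ and even-$r$ cases, using the explicit $n_i$ listed in Theorem~\ref{qeven}. The geometric series $(q^{2(r-2)}-1)/(q^2-1)$ and $(q^{2(r-1)}-1)/(q^2-1)$ in $N$ cancel against their counterparts in $n_i$, leaving only the low-order corrections, which produces the polynomial expressions displayed in the statement. For $q$-divisibility I would observe, after this cancellation, that every summand of each $w_i$ is of the form $\pm q^{a}$ with $a\ge 1$ (the smallest exponent that appears is $r-2\ge 1$ since $r\ge 3$), so $q$ divides each $w_i$.

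The main obstacle I anticipate is purely bookkeeping: matching the five $n_i$ of Theorem~\ref{qeven} with the five $w_i$ of the statement in the correct order (note the reversal, since larger $n_i$ correspond to smaller $w_i$), and confirming in both parities that the cancellations collapse the geometric series exactly as claimed. Nothing deeper is needed, since minimality and divisibility have already been handled by the structural lemmas and the intersection analysis upstream.
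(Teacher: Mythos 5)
Your proposal is correct and essentially coincides with the paper's own (implicit) argument: Theorem~\ref{thm:52} is obtained directly from Theorem~\ref{qeven} by the relation $w_i=N-n_i$, together with the observation that $\cB$ spans $\PG(r,q^2)$ so the dimension is $r+1$, and the $q$-divisibility is read off the resulting polynomial expressions. One small caveat: the statement claims no minimality (and indeed $\cC(\cB)$ is \emph{not} minimal for $q$ even, as the remark following the theorem notes), so your closing aside that minimality has been ``handled upstream'' should simply be dropped.
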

\begin{remark}
  For $q$ even, the code $\cC(\cB)$ is not a minimal code,
  as the support of the words of weight $w_5$ is contained
  in the support of  words of weight $w_1$.
 This is consistent with Theorem~\ref{mwcw},
 as $\langle\cB_{\infty}\rangle=\cB_{\infty}\neq\Sigma_{\infty}$ in this
 case. So, the $q^2-1$ words of weight
 $w_1=q^{2r-1}$ are exactly those which are not
 minimal.
\end{remark}
Let $A_j$ denote  the number
of codewords of $\cC(\cB)$ of weight $j$.

\begin{proposition}\label{pro3}
The points of $\cB$ in $\PG(3,q^2)$, with $q$ an odd prime power,  determine a minimal  projective code $\cC (\cB)$ of length $N=q^5+2q^2+1$,
non-zero weights:
   \[
     w_1 =q^5,\quad
     w_2 = q^5-q^3+2q^2+q-1,\quad
     w_3 = q^5-q^3+2q^2, \]
   \[
     w_4 = q^5-q^3+q^2+q-1,\quad
     w_5 = q^5-q^3+q^2,
   \]
   and weight enumerator
   $w(x) := \sum_{i} A_i x^i$, where\\

 $A_0=1$, \ $A_{w_1}=q^2-1$, \ $A_{w_2} =(q^6-q^5+q^3)(q^2-1)$, \ $A_{w_3}=(q^4 - q^2)(q^2-1)$, \
    \[A_{w_4}=(q^5-q^3)(q^2-1),  \ A_{w_5}=2q^2(q^2-1)\]
and all of the remaining $A_i$'s are $0$.
\end{proposition}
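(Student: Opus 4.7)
The plan is to compute the weight enumerator of $\cC(\cB)$ for $r=3$ by counting, for each candidate weight $w_i=N-n_i$, the number $N_i$ of hyperplanes $\Sigma$ of $\PG(3,q^2)$ with $|\Sigma\cap\cB|=n_i$; the projective nature of $\cC(\cB)$ gives $A_{w_i}=(q^2-1)N_i$. The length $N=q^5+2q^2+1$ and the five intersection sizes come directly from Theorem~\ref{qodd} with $r=3$, and the weights agree with Theorem~\ref{fin} specialised to $r=3$ (note that Theorem~\ref{fin} excluded $r=3$ because $w_2\not\equiv 0\pmod q$, so $\cC(\cB)$ fails to be $q$-divisible; however $w_5/w_1 = 1 - q^{-2} + q^{-3}$ still exceeds $(q^2-1)/q^2$, so the Ashikhmin--Barg criterion \eqref{suff} still delivers minimality).

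The enumeration rests on the surgery identity
\[|\Sigma\cap\cB| \;=\; |\Sigma\cap\cH| \;-\; |\Sigma\cap\cF| \;+\; |\Sigma\cap\cB_\infty|\]
together with three structural facts: $\cH$ is quasi-Hermitian, so $|\Sigma\cap\cH|\in\{q^3+1,\,q^3+q^2+1\}$; $\cF=\cH_\infty$ is the cone from $P_\infty=(0,0,0,1)$ over the Hermitian curve $\cH(1,q^2)\subset L_0=\{X_0=X_3=0\}$, hence the union of $q+1$ lines through $P_\infty$; and $\cB_\infty$ factors over $\GF(q^2)$ as the union of the two lines $X_1=\pm i X_2$ with $i=\sqrt{-1}$, both also through $P_\infty$. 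I would partition the hyperplanes of $\PG(3,q^2)$ into three classes based on their position relative to $P_\infty$: the single plane $\Sigma_\infty$, giving $N_1=1$; the $q^4+q^2$ planes through $P_\infty$ different from $\Sigma_\infty$ (Class~II); and the $q^6$ planes avoiding $P_\infty$ (Class~III).

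For a Class~III plane $\Sigma$, the trace $M=\Sigma\cap\Sigma_\infty$ avoids $P_\infty$; projection from $P_\infty$ restricts to a bijection $M\to L_0$, forcing $|\Sigma\cap\cF|=q+1$, while $M$ crosses each $\cB_\infty$-line in a single distinct point, so $|\Sigma\cap\cB_\infty|=2$. The surgery identity thus confines $|\Sigma\cap\cB|$ to $n_2$ or $n_4$, depending on whether $\Sigma$ is secant or tangent to $\cH$. The tangent/secant split is extracted from the double-count $\sum_{\Sigma\ni P_\infty}|\Sigma\cap\cH|=(q^4+q^2+1)+(|\cH|-1)(q^2+1)$, which combined with the two-weight identity $A_{\mathrm{tan}}=|\cH|$ pins down the numbers of tangent and secant hyperplanes in Class~III. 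For a Class~II plane, $M$ passes through $P_\infty$, so $|\Sigma\cap\cF|$ and $|\Sigma\cap\cB_\infty|$ take the values $q^2+1$ or $1$ according to whether $M$ is one of the $q+1$ $\cF$-lines or one of the two $\cB_\infty$-lines; moreover, any plane through an $\cF$-line is forced to be tangent, because it contains a line of $\cH$ and a non-degenerate Hermitian curve contains no lines. Running the surgery identity through these sub-cases shows that Class~II planes contribute only to $n_3$ or $n_5$, and a straightforward incidence count for each sub-case yields the corresponding contributions.

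The principal technical obstacle will be the Class~II case analysis, because the incidence of the two $\cB_\infty$-lines with the $q+1$ $\cF$-lines depends on whether $i^{q+1}=-1$ in $\GF(q^2)$, i.e.\ on $q \bmod 4$: the two sets of lines share $2$ lines when $q\equiv 1\pmod 4$ and are disjoint when $q\equiv 3\pmod 4$. I expect the totals $N_3$ and $N_5$ to come out independent of this residue, because the tangent/secant split of Class~II planes whose trace lies on $\cB_\infty$ rearranges itself compatibly with the change of residue. I would finish with the consistency checks $\sum_i N_i=(q^8-1)/(q^2-1)$ and $\sum_i N_i n_i=N(q^4+q^2+1)$, which together with the tangent-count identities already invoked determine the enumerator uniquely.
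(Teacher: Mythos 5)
Your plan is essentially the paper's own proof: the same surgery identity, the same partition of the planes of $\PG(3,q^2)$ into $\Sigma_\infty$, planes through $P_\infty$, and planes off $P_\infty$, the same use of the two-character behaviour of the quasi-Hermitian variety $\cH$, and the same appeal to Condition~\eqref{suff} for minimality. One simplification you are missing: your Class~II worry about $q\bmod 4$ never materialises, for a structural reason. Any plane $\Sigma\neq\Sigma_\infty$ through $P_\infty$ is ``vertical'': its affine part is the set of points lying over a line $\ell$ of the $(x_1,x_2)$-plane, and over every $(x_1,x_2)$ the equation of $\cB$ has exactly $q$ solutions in $x_3$, because the right-hand side $(\beta^q-\beta)(x_1^{q+1}+x_2^{q+1})-\alpha^q(x_1^{2q}+x_2^{2q})+\alpha(x_1^2+x_2^2)$ always has trace zero over $\GF(q)$. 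Hence $\mid\Sigma\cap\cB\mid=q^3+\mid M\cap\cB_\infty\mid$ with $M=\Sigma\cap\Sigma_\infty$, independently of tangency to $\cH$ and of the residue of $q$; this gives $N_5=2q^2$ and $N_3=q^4-q^2$ at once, in agreement with the paper, and you never need the interaction between the $\cB_\infty$-lines and the $\cF$-lines.

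The important point, however, concerns Class~III: carried out correctly, your count does \emph{not} reproduce the stated $A_{w_2}$ and $A_{w_4}$. Your identities are right: for a two-character set with these parameters the total number of $(q^3+q^2+1)$-planes is forced to be $\mid\cH\mid=q^5+q^3+q^2+1$ (not $q^5+q^2+1$ as quoted in the paper's proof, the secant count being correspondingly $q^6-q^5+q^4-q^3$), and your double count through $P_\infty$ gives $q^3+q^2+1$ tangent and $q^4-q^3$ secant planes through $P_\infty$, as in the paper. Subtracting, the $q^6$ planes off $P_\infty$ split into $q^5$ tangent and $q^6-q^5$ secant ones, so your method yields $A_{w_4}=q^5(q^2-1)$ and $A_{w_2}=(q^6-q^5)(q^2-1)$, whereas the proposition asserts $(q^5-q^3)(q^2-1)$ and $(q^6-q^5+q^3)(q^2-1)$. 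Your own planned consistency check decides the matter: the printed enumerator violates $\sum_i N_i n_i=N(q^4+q^2+1)$ (it is off by exactly $q^5$, i.e.\ $q^3$ planes weighted by $n_4-n_2=q^2$), while your values satisfy it, and also the second-moment identity $\sum_i N_i n_i(n_i-1)=N(N-1)(q^2+1)$. So the defect is not in your argument but in the statement you are asked to prove: following your plan faithfully you will establish the proposition with the two coefficients corrected as above, the remaining data ($N$, the five weights, $A_{w_1},A_{w_3},A_{w_5}$, and minimality) being as stated and obtained exactly as in the paper.
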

\begin{proof}
As $w_i=N-n_i$,  where $n_i$'s are the intersection numbers of $\cB$ with respect to the planes, the first part of our theorem follows from  Theorem~\eqref{qodd} for $r=3$ and from the fact that $w_5/w_1>(q^2-1)/q^2$.  We are going to compute the weight enumerator of the code. We observe that $\Sigma_{\infty}$ is the only hyperplane meeting $\cB$ in $2q^2+1$ points and this means  $A_{w_1}=q^2-1$. Also, through the point $P_{\infty}=(0,0,0,1)$ there pass $2q^2$ planes meeting $\Sigma_{\infty}\cap \cB$ in one line and $q^4-q^2$ planes meeting  $\Sigma_{\infty}\cap \cB$ just at the point $P_{\infty}$. Hence, $A_{w_5}=2q^2(q^2-1)$ and $A_{w_3} =(q^4-q^2)(q^2-1)$.

Finally, we recall that $\cH= (\cB \cap AG(3,q^2))\cup \cF$  is a quasi-Hermitian variety of $\PG(3,q^2)$.  Let us call a  plane intersecting $\cH$ in $i$ points an  $(i)$-$plane$ of $\cH$. Using  the following properties of $\cH$:
 \begin{itemize}
 \item the number of $(q^3+q^2+1)$-planes is $q^5+q^2+1$ ,
 \item the number of $(q^3+1)$-planes is  $q^6-q^5+q^4$,
 \item  $P_{\infty}$ lies on $q^4-q^3$  $(q^3+1)$-planes and on $q^3+q^2+1$ $(q^3+q^2+1)$-planes of $\cH$,
 \end{itemize}
   we get $A_{w_4}=(q^5 - q^3)(q^2-1)$ and  $A_{w_2}=(q^6-q^5+q^3)(q^2-1)$.
\end{proof}

 \begin{remark}
Theorems~\ref{lines} and~\ref{fin} yield that the higher weights $d_1(\cC(\cB))$ and $d_{r-1}(\cC(\cB))$ are as follows:
\begin{itemize}
\item for $q$ odd:
\begin{itemize}
\item if $r$ is odd:\\
\quad
  $ d_1(\cC(\cB)) =q^{2r-1}-q^{2r-3}+q^{2(r-2)}$,\\
   \quad   $d_{r-1}(\cC(\cB))= q^{2r-1}+q^2\frac{(q^{2(r-2)}-1)}{q^2-1}+q^{r-1}-q^2;$
\item if $r$ is even:\\
\quad
  $ d_1(\cC(\cB)) =q^{2r-1}-q^{2r-3}+q^{2(r-2)}-q^{r-2}$,\\
\quad  $d_{r-1}(\cC(\cB))= q^{2r-1}+q^2\frac{(q^{2(r-2)}-1)}{q^2-1}-q^2$;
\end{itemize}
\item for $q$ even and any $r$:\\
 \quad $d_1(\cC(\cB)) = q^{2r-1}-q^{2r-3},$\\
\quad $d_{r-1}(\cC(\cB))=q^{2r-1}+q^{2(r-2)}+\dots+q^4$.
\end{itemize}

We leave to a future work to determine  the higher weights  $d_k(\cC(\cB))$ for all $1<k<r-1$.
\end{remark}

\section{Secret sharing schemes from hypersurfaces}
\label{sec:6}
In this section we recall a method for constructing SSS's based on linear codes and then we present a class of SSS's using the hypersurfaces introduced in Section~\ref{sec:4}.

Let $C$ be an $[n,k,d]_q$-linear code.
In the SSS $\fS(C)$ based on $C$, the secret is an element of $\GF(q)$, and $n-1$ parties
$P_1,P_2,\dots,P_{n-1}$ as well as a trusted third party are involved. To compute the shares with respect to a secret $s$, the trusted third party randomly chooses a vector ${\bf{u}}=(u_0,\dots, u_{k-1})\in \GF(q)^k$ such that $s = {\bf{u}} g_0$ and $G=(g_0, g_1,\dots g_{n-1})$ is a generator matrix of $C$. There are altogether $q^{k-1}$ such vectors ${\mathbf{u}}  \in \GF(q)^k$. The third
party then treats $\bf{u}$ as an information vector and computes the corresponding codeword
$t=( t_0,t_1,\dots,t_{n-1})= {\bf{u}}G$.
He then gives $t_i$ to party $P_i$ as share for each $i\geq1$.
Note that $t_0 = {\bf{u}}g_0 = s$. It is easily seen that a set of shares $\{t_{i_1},t_{i_2} ,\dots, t_{i_m}\}$  determines the secret if and only if $g_0$ is a linear combination of $g_{i_1},\dots, g_{i_m}$ .
The following properties hold.

\begin{proposition}\cite{M92}
Let $G$ be a generator matrix of an $[n, k; q]$-code $C$. In the SSS based on $C$, a set of shares
$\{t_{i_1},t_{i_2} ,\dots, t_{i_m}\}$ determines the secret if and only if there exists a codeword
\[{\bf{c}}=(1, 0, \dots,0, c_{i_1}, 0,\dots, 0, c_{i_m}, 0,\dots,0 )\]
in the dual code $C^{\perp}$ where $c_{i_j}\neq 0$ for at least one $j$, $1\leq i_1<\dots  < i_m\leq n-1$ and $1 \leq m \leq n-1$.
If there is a codeword like $\bf{c}$ in $C^{\perp}$, then the vector
\[g_0=\sum_{j=1,\dots, m} x_j g_{i_j} \]
where $x_j \in \GF(q)$ for $1 \leq j \leq  m$. Then the secret $s$ is recovered by computing
\[s=\sum_{j=1,\dots, m} x_j t_{i_j}.\]
\end{proposition}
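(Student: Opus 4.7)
The plan is to reduce the statement to a single piece of linear algebra: the condition ``a set of shares determines the secret'' is equivalent to $g_0$ lying in the linear span of the corresponding columns of $G$, and this in turn is equivalent to the existence of the claimed codeword in $C^{\perp}$.

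First I would unpack the definition of ``determining the secret''. A set $\{t_{i_1},\dots,t_{i_m}\}$ determines the secret precisely when, for any two admissible information vectors $\mathbf u,\mathbf u'\in\GF(q)^k$ that agree with the given shares (i.e.\ $\mathbf u g_{i_j}=\mathbf u' g_{i_j}$ for all $j$), one has $\mathbf u g_0=\mathbf u' g_0$. Setting $\mathbf v=\mathbf u-\mathbf u'$, this is the statement that $\mathbf v g_{i_j}=0$ for all $j$ implies $\mathbf v g_0=0$, i.e.\ the linear functional $\mathbf v\mapsto \mathbf v g_0$ vanishes on $\bigcap_j\ker(\mathbf v\mapsto \mathbf v g_{i_j})$. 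By standard linear algebra (or duality in $\GF(q)^k$), this is equivalent to $g_0\in\mathrm{span}\{g_{i_1},\dots,g_{i_m}\}$.

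Next I would translate that span condition into a dual-code statement. Writing $g_0=\sum_{j=1}^m x_j g_{i_j}$ is the same as the relation $g_0-\sum_j x_j g_{i_j}=0$ among the columns of $G$, which, by definition of $C^{\perp}$, means that the vector $\mathbf c$ with $c_0=1$, $c_{i_j}=-x_j$, and all other entries zero lies in $C^{\perp}$. The requirement that at least one $c_{i_j}$ is non-zero simply excludes the degenerate case $g_0=0$ (which would make the scheme trivial). Conversely, any codeword of $C^{\perp}$ of the stated shape yields an expression $g_0=\sum_j(-c_{i_j})g_{i_j}$, hence a recovery of $g_0$ as a combination of the $g_{i_j}$'s. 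The two conditions are therefore equivalent.

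Finally, given such an expression $g_0=\sum_j x_j g_{i_j}$, the reconstruction formula is immediate: evaluating on the secret codeword $\mathbf t=\mathbf u G$ gives
\[
s=\mathbf u g_0=\sum_{j=1}^m x_j\,\mathbf u g_{i_j}=\sum_{j=1}^m x_j\,t_{i_j},
\]
which is the claimed recovery formula. The only mildly delicate point in the argument is the first step, where one must be careful to use that $\mathbf u$ is \emph{uniformly} chosen subject to $\mathbf u g_0=s$, so that ``determines'' really means ``is a function of the shares on the affine fibre''; the above duality argument then yields exactly the desired linear-algebraic characterisation, and the rest of the proof is purely formal.
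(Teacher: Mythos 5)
Your argument is correct: the paper itself gives no proof of this proposition (it is quoted from Massey's work \cite{M92}, and the sentence immediately preceding it merely notes that it is ``easily seen'' that the shares determine the secret if and only if $g_0$ lies in the span of $g_{i_1},\dots,g_{i_m}$), and your duality argument is exactly the standard reasoning behind that remark, carried out in full. The only point worth noting is that the condition ``$c_{i_j}\neq 0$ for at least one $j$'' is automatic once $g_0\neq 0$, which you correctly identify as the only degenerate case.
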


If a set of participants can recover the secret by combining their shares, then any group of participants containing this set can also recover the secret.

\begin{definition}
  A set of participants is called a \emph{minimal access set} if they can recover the secret by combining their shares and none of its
  proper subsets can do so. Here,  a proper subset has fewer members than this set.
  The \emph{access structure} $\fA(\cC)$ of the SSS
  $\fS(\cC)$ is the set of its minimal access sets.
\end{definition}

\begin{proposition}\label{con}\cite{M92}
Let $C$ be an $[n, k; q]$-code, and let $G = (g_0,g_1,\dots, g_{n-1})$ be its generator matrix. If each nonzero codeword of $C$ is a minimal word, then in the SSS based on $C^{\perp}$, there are altogether $q^{k-1}$ minimal access sets. In addition, we have
the following:
\begin{enumerate}
\item If $g_i$ is a multiple of $g_0$, $1 \leq i \leq n-1$, then participant $P_i$ must be in every minimal access set. Such a participant is called a \emph{dictatorial} participant.
\item If $g_i$ is not a multiple of $g_0$, $1 \leq i \leq n-1$,  then participant $P_i$ must be in $(q-1)q^{k-2}$ out of $q^{k-1}$ minimal access sets.
\end{enumerate}
\end{proposition}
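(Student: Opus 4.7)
The plan is to establish a bijection between the minimal access sets of $\fS(\cC^\perp)$ and the codewords $c\in\cC$ with $c_0=1$, after which the two assertions reduce to routine counts of solutions to linear systems. By the preceding proposition (applied with the roles of $\cC$ and $\cC^\perp$ interchanged), a set $\{P_{i_1},\dots,P_{i_m}\}$ is an access set in $\fS(\cC^\perp)$ if and only if there is a codeword $c\in\cC$ with $c_0=1$ and $\mathrm{supp}(c)\subseteq\{0,i_1,\dots,i_m\}$. Hence a minimal access set is exactly a set of the form $\mathrm{supp}(c)\setminus\{0\}$ where $c\in\cC$ has $c_0=1$ and is inclusion-minimal in support among such codewords.

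This is the step where the minimality hypothesis is essential. Suppose $c,c'\in\cC$ both satisfy $c_0=c_0'=1$ and $\mathrm{supp}(c')\subseteq\mathrm{supp}(c)$. Since $c$ is a minimal codeword, one has $c'=\alpha c$ for some $\alpha\in\GF(q)$; evaluating at position $0$ forces $\alpha=1$, so $c'=c$. Thus no two distinct codewords with $c_0=1$ have comparable supports, and every such codeword already realises a minimal access set. Writing $c=uG$ with $u\in\GF(q)^k$, the condition $c_0=u\cdot g_0=1$ is a single nonzero linear equation (we may assume $g_0\neq 0$ to exclude the degenerate case), so it has exactly $q^{k-1}$ solutions. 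This gives the claimed count of $q^{k-1}$ minimal access sets.

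For part (1), if $g_i=\lambda g_0$ with $\lambda\in\GF(q)^*$, then for every $c=uG$ with $c_0=1$ we have $c_i=u\cdot g_i=\lambda(u\cdot g_0)=\lambda\neq 0$. Hence $i\in\mathrm{supp}(c)$ for every such codeword, and by the bijection $P_i$ lies in every minimal access set. For part (2), the hypothesis that $g_i$ is not a multiple of $g_0$ means $g_0$ and $g_i$ are linearly independent in $\GF(q)^k$, so the system
\[
u\cdot g_0=1,\qquad u\cdot g_i=0
\]
has exactly $q^{k-2}$ solutions. Hence the codewords $c\in\cC$ with $c_0=1$ and $c_i=0$ number $q^{k-2}$, so those with $c_0=1$ and $c_i\neq 0$ number $q^{k-1}-q^{k-2}=(q-1)q^{k-2}$; by the bijection this is the number of minimal access sets containing $P_i$.

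The only delicate point is the application of the minimality hypothesis to show that distinct codewords with $c_0=1$ yield distinct minimal access sets; without it, different codewords could have comparable supports and the correspondence would collapse. Once this is in place, both counts are immediate, and the dichotomy between dictatorial and non-dictatorial participants is read off directly from the linear dependence (or independence) of $g_i$ and $g_0$.
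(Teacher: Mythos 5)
Your proof is correct: the paper states this proposition without proof (citing Massey's work \cite{M92}), and your argument --- the bijection between minimal access sets and codewords of $C$ with first coordinate $1$, where the hypothesis that every nonzero word is minimal forces the supports of such codewords to be pairwise incomparable, followed by the two linear-system counts giving $q^{k-1}$ and $(q-1)q^{k-2}$ --- is precisely the standard argument behind the cited result. The only cosmetic caveat is the implicit assumption that the columns $g_i$ are nonzero (so that ``multiple of $g_0$'' means a nonzero multiple), which is automatic for the projective codes to which the paper applies the proposition and which you already flag for $g_0$.
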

We refer the reader to~\cite{M92} for the actual construction of the
SSS.

In this section we shall consider the access structures of SSS's arising from codes constructed from hypersurfaces. These access
structures turn out to reflect the geometry of the hypersurface and they also
afford a compact description in terms of their automorphism groups.

\begin{proposition}
  Let $C$ and $C'$ be two equivalent $[n,k,d]$-codes over $\GF(q)$ with
  generator matrices $G$ and $G'$ with $G'=RGPD$ where $R$ is a $k\times k$
  invertible matrix, $P$ is an $n\times n$ permutation matrix and $D$ is
  an invertible $n\times n$ diagonal matrix.
  Suppose that
  the permutation $\sigma:\{0,\dots,n-1\}\to\{0,\dots,n-1\}$,  induced by $P$,
  fixes $0$.
  Then there is a bijection between the shares of the SSS's
  $\fS(C)$ and $\fS(C')$ as well as between the corresponding access
  structures $\fA(C)$ and $\fA(C')$.
\end{proposition}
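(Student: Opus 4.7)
The plan is to first unpack the structural relation between the columns of $G$ and $G'$ and then exhibit the bijection at the level of share vectors, from which the bijection of access structures follows.

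First I would write out explicitly how the columns transform. Let $g_0,\dots,g_{n-1}$ be the columns of $G$, let $\sigma$ be the permutation of $\{0,\dots,n-1\}$ induced by $P$ (so that $(GP)_{:,i}=g_{\sigma(i)}$), and let $d_0,\dots,d_{n-1}\in\GF(q)^*$ be the diagonal entries of $D$. Then the $i$-th column of $G'=RGPD$ is $g'_i=d_i\,R g_{\sigma(i)}$. The standing hypothesis $\sigma(0)=0$ then gives $g'_0=d_0\,R g_0$, which is what makes the $0$-th coordinate of a codeword (i.e.\ the secret) behave well under the equivalence.

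Next, I would set up the bijection between share vectors. Given an information vector $\mathbf{u}\in\GF(q)^k$ used in $\fS(C)$ to produce the codeword $t=\mathbf{u}G$, define $\mathbf{u}'=\mathbf{u}R^{-1}$. A direct computation gives
\[
t'=\mathbf{u}'G'=\mathbf{u}R^{-1}(RGPD)=(\mathbf{u}G)PD=tPD,
\]
so $t'_i=d_i\,t_{\sigma(i)}$ for every $i$. In particular $t'_0=d_0 t_0$, so the mapping between secrets is the bijection $s\mapsto d_0 s$ of $\GF(q)$, and on the participant indices $\{1,\dots,n-1\}$ (which $\sigma$ preserves as a set, since $\sigma(0)=0$) the share $t_i$ of $P_i$ in $\fS(C)$ corresponds to the share $d_{\sigma^{-1}(i)}t_i$ of $P_{\sigma^{-1}(i)}$ in $\fS(C')$. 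Invertibility of $R$ and of each $d_i$ makes the assignment $\mathbf{u}\leftrightarrow \mathbf{u}'$ a bijection, and consequently a bijection between the sets of shares produced by the two schemes.

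Now I would transfer this to access sets. Recall that $\{P_{i_1},\dots,P_{i_m}\}$ is an access set in $\fS(C)$ iff $g_0\in\langle g_{i_1},\dots,g_{i_m}\rangle$. Substituting $g'_j=d_j R g_{\sigma(j)}$ and using that $R$ is invertible and the $d_j$'s are nonzero, I get the equivalence
\[
g'_0\in\langle g'_{j_1},\dots,g'_{j_m}\rangle
\ \Longleftrightarrow\
g_0\in\langle g_{\sigma(j_1)},\dots,g_{\sigma(j_m)}\rangle.
\]
Hence the map $\{P_{i_1},\dots,P_{i_m}\}\mapsto\{P_{\sigma^{-1}(i_1)},\dots,P_{\sigma^{-1}(i_m)}\}$ is a bijection from $\fA(C)$ onto $\fA(C')$. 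Since it is induced by a bijection on participants, it preserves the inclusion order on subsets of $\{P_1,\dots,P_{n-1}\}$ and therefore maps minimal access sets to minimal access sets, completing the proof.

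The main thing to be careful about, rather than any real obstacle, is bookkeeping of the permutation: $\sigma$ permutes \emph{columns} of $G$ but the corresponding permutation of \emph{participants} is $\sigma^{-1}$, and the hypothesis $\sigma(0)=0$ is exactly what guarantees that this relabeling does not move the distinguished position that carries the secret; without it, there would be no natural way to identify the two SSS's and the statement would fail.
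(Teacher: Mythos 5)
Your proof is correct and complete. For context: the paper states this proposition without supplying any proof, so there is no argument of the authors to compare against; what you wrote is the natural verification, and it is fully consistent with Massey's access criterion quoted just before the proposition (a set of shares determines the secret iff $g_0\in\langle g_{i_1},\dots,g_{i_m}\rangle$). The key steps — the column relation $g'_i=d_i R\,g_{\sigma(i)}$, the bijection $\mathbf{u}\mapsto\mathbf{u}R^{-1}$ on information vectors giving $t'=tPD$ and hence $s\mapsto d_0 s$ on secrets, the relabelling $P_i\mapsto P_{\sigma^{-1}(i)}$ of participants, and the observation that a bijection on participants preserves inclusion and hence minimality — are exactly what is needed, and your closing remark correctly isolates $\sigma(0)=0$ as the only place the hypothesis is used.
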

In light of the above proposition, given an hypersurface $\cV$ of $\PG(V)$
and a fixed point $P_0\in\cV$, we can construct many equivalent $[n,k,d]$-linear
codes with a generator matrix having $P_0$ as its first column.

So, we propose the following notation.
Let $\cV$ be an hypersurface, and let $P_1$ be a chosen point of $\cV$;
  let $C=\cC(\cV;P_0)$ be a projective code arising from $\cV$,  with a generator matrix having $P_0$ as its first column.
 We denote the SSS's  based on $C$ by the symbol $\fS(\cV;P_0)$ and
  the SSS's based on $C^{\perp}$ by $\fS(\cV^{\perp};P_0)$.

\begin{remark}
  Suppose $\cV\subseteq\PG(V)$.
  The elements of the access structure $\fA(\cV^{\perp};P_0)$ correspond
  to the support of the subsets $(\cV\setminus\{P_0\})\setminus\Pi$ of
  $\cV$ as $\Pi$ varies among the hyperplanes of $\PG(V)$ not containing
  the point $P_0$. In particular,  we can describe $\fA(\cV^{\perp};P_0)$ directly, without explicit mention of
  the projective code $C^{\perp}$ induced by $\cV$.
\end{remark}

\begin{definition}
  We say that two access structures $\fA$ and $\fA'$ associated to
 SSS's $\fS$ and $\fS'$ with set of participants
 $X$ and $X'$, respectively, are \emph{equivalent} if there is a
 bijection $\theta:X\to X'$ such that
 $\fA'=\{ S^{\theta}\colon S\in\fA\}$.
\end{definition}

\begin{definition}
  Let $\fS$ be a SSS with corresponding access structure $\fA$ and
  set of participants $X$. We say that $\gamma\in\mathrm{Sym}(X)$ is an
  \emph{automorphism} of  $\fA$ if,  for any $T\in\fA$,  we have
  $T^{\gamma}:=\{ \gamma(t) \colon t\in T\}\in\fA$.
  Given a subgroup $\Gamma$ of $\mathrm{Sym}(X)$ and some elements
  $S_1,\dots,S_t\in\fA$, we say that $\fA$ is a
  \emph{$\Gamma$-development} of the \emph{starters} $S_1,\dots,S_t$ if
  \[ \fA=\{ S_i^{\gamma}\colon \gamma\in\Gamma, i\in 1,\dots,t \}. \]
\end{definition}

\begin{proposition}
  Let $\cV=\{P_0,\dots,P_{n-1}\}$ be an algebraic variety of $\PG(V)$.
  Denote by $\fA(\cV^{\perp};P_0)$
  the access structure of the SSS  $\fS(\cV;P_0)$.
  Let  $\gamma$ be a collineation of $\PG(V)$ fixing $P_0$ and
  such that $\gamma(\cV)=\cV$. Then $\hat{\gamma}\in\mathrm{Sym}(1,\dots,n-1)$, where
  \[ \hat{\gamma}(i)=j \Leftrightarrow \gamma(P_i)=P_j \]
  acts as an
  automorphism of $\fA(\cV^{\perp};P_0)$.
\end{proposition}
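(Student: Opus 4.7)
The plan is to exploit the description of $\fA(\cV^{\perp};P_0)$ supplied by the preceding Remark: each minimal access set is of the form
\[ T_\Pi := \{\, i\in\{1,\dots,n-1\}\colon P_i\notin\Pi\,\} \]
as $\Pi$ ranges over the hyperplanes of $\PG(V)$ not passing through $P_0$. Once this parametrisation is in hand, establishing that $\hat\gamma$ is an automorphism of $\fA(\cV^{\perp};P_0)$ reduces to showing that $\hat\gamma$ permutes the family $\{T_\Pi\}$ among itself.

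First I would check that $\hat\gamma$ is well defined as an element of $\mathrm{Sym}(\{1,\dots,n-1\})$: since $\gamma$ preserves $\cV$ setwise and fixes $P_0$, it restricts to a bijection of $\cV\setminus\{P_0\}=\{P_1,\dots,P_{n-1}\}$, which via the indexing yields the claimed permutation $\hat\gamma$.

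The heart of the argument is a one-line calculation. For a hyperplane $\Pi$ with $P_0\notin\Pi$, using the equivalence $\hat\gamma(i)=j\iff P_i=\gamma^{-1}(P_j)$, we get
\[ \hat\gamma(T_\Pi)=\{\,j\colon \gamma^{-1}(P_j)\notin\Pi\,\}=\{\,j\colon P_j\notin\gamma(\Pi)\,\}=T_{\gamma(\Pi)}. \]
Since $\gamma$ is a collineation, $\gamma(\Pi)$ is again a hyperplane of $\PG(V)$, and from $\gamma(P_0)=P_0\notin\Pi$ we deduce $P_0\notin\gamma(\Pi)$, so $\gamma(\Pi)$ is itself an admissible index for the access structure. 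Hence $\hat\gamma(T_\Pi)\in\fA(\cV^{\perp};P_0)$, and the conclusion follows.

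The only subtle point to watch out for is that the parametrisation by hyperplanes not through $P_0$ could, in principle, include subsets that are not themselves minimal access sets; one has to be sure that $\hat\gamma$ preserves the minimal subclass. This is immediate, because $\gamma$ induces a monomial transformation of $\cC(\cV;P_0)$ that fixes the first coordinate, and this transformation descends to a code equivalence on the dual $\cC(\cV;P_0)^\perp$, preserving both the supports and the covering relation $\preccurlyeq$ of codewords, hence the set of minimal codewords with leading coefficient $1$.
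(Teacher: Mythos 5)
Your argument is correct and is essentially the paper's own proof: both use the Remark's parametrisation of the minimal access sets by hyperplanes not through $P_0$, and both conclude by noting that $\gamma$, preserving $\cV$ and fixing $P_0$, permutes those hyperplanes, so $\hat\gamma$ permutes the corresponding sets $T_\Pi$. Your explicit computation $\hat\gamma(T_\Pi)=T_{\gamma(\Pi)}$ and the closing check via the induced monomial equivalence just spell out details the paper leaves implicit.
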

\begin{proof}
  The elements of $\fA(\cV^{\perp};P_0)$ correspond to the
  support of the complement of the intersection of $\cV$ with the hyperplanes
  $\Sigma$ of $\PG(V)$ for which $P_0\not\in\Sigma$.
  Since $\gamma$ preserves $\cV$ and fixes $P_0$, it also acts on
  the hyperplanes of $\PG(V)$ not through $P_0$; as such $\hat{\gamma}$
  acts on the access structures as a permutation group.
\end{proof}
\begin{remark}
  Let $\cV$ be a projective hypersurface in $\PG(V)$ and suppose
  that $\cV$ admits a transitive group of automorphisms. Then, for any
  $P,Q\in\cV$ we have that $\fA(\cV^{\perp};P)$ is equivalent to
  $\fA(\cV^{\perp};Q)$.
\end{remark}

We now apply these notions to quasi-Hermitian varieties.
Let $\cB$ be the hypersurface of Equation~\eqref{eqg} and let $P_0$ be
a fixed point of $\cB$. Denote by
$C=\cC(\cB)$  an associated projective $q^2$-ary $[N,r+1,d_1]$-code.

Using Proposition~\ref{con}, we shall first prove that for $q$ odd,  the SSS $\fS(\cB^{\perp};P_0)$ based on  the dual code of $\cC(\cB)$ is \emph{democratic}, that is,   each participant is involved in the same number of minimal access sets, no matter the choice of the point $P_0$.

\begin{theorem}
\label{thm:64}
Let $r\geq 3$ and $q$ be an odd prime power. In the SSS
$\fS(\cB^{\perp};P_0):=\fS(C^{\perp})$
based on the dual code $C^{\perp}:=\cC (\cB)^{\perp}$, there are altogether $q^{2r}$ minimal access sets and
  $m=N-1$ participants where
\begin{itemize}
\item
$m =q^{2r-1}+q^{r-1}+\frac{(q^{2(r-1)}-q^2)}{q^2-1}$, for $r$ odd;
\item
$m =q^{2r-1}+\frac{(q^{2(r-1)}-q^2)}{q^2-1}$, for $r$ even.
\end{itemize}
Furthermore, each participant $P_i$,  $\forall i=1\dots m$, is involved in exactly $(q^2-1)q^{2(r-1)}$ out of $q^{2r}$ minimal access sets.
\end{theorem}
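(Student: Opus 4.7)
The plan is to derive the theorem as a direct application of Proposition~\ref{con} to the code $C=\cC(\cB)$, using the minimality of $C$ established in Theorem~\ref{fin}. The only nontrivial check will be the absence of dictatorial participants.

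First, I note that $C$ is a $q^2$-ary linear code of length $N$ and dimension $k=r+1$, and by Theorem~\ref{fin} every nonzero codeword of $C$ is a minimal word (since $q$ is odd and the ratio $w_5/w_1$ satisfies the Ashikhmin--Barg condition~\eqref{suff} for the field $\GF(q^2)$). Hence Proposition~\ref{con}, applied with the base field $\GF(q^2)$ in place of $\GF(q)$ and with $k=r+1$, tells us that $\fS(\cB^{\perp};P_0)$ has exactly $(q^2)^{k-1}=q^{2r}$ minimal access sets. The number of participants is $m=N-1$, and substituting the two values of $N$ from Theorem~\ref{fin} gives the expressions for $m$ stated in the theorem for $r$ odd and $r$ even.

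Next I would argue that $\fS(\cB^\perp;P_0)$ has no dictatorial participant. In the setup of Proposition~\ref{con}, participant $P_i$ is dictatorial precisely when $g_i$ is a scalar multiple of $g_0$, i.e.\ when the $i$-th and $0$-th columns of the generator matrix represent the same projective point. By construction, however, the columns of the generator matrix of $\cC(\cB)$ are (normalised) coordinate vectors of the $N$ distinct projective points of $\cB\subseteq\PG(r,q^2)$, so no column with $i\ge 1$ is proportional to the column representing $P_0$. Consequently part~(2) of Proposition~\ref{con} applies to every participant $P_i$ with $1\le i\le m$, and each such participant appears in exactly $(q^2-1)(q^2)^{k-2}=(q^2-1)q^{2(r-1)}$ minimal access sets, independently of the choice of $P_0\in\cB$.

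There is no substantive obstacle in this argument: the numerical matches $(q^2)^{k-1}=q^{2r}$ and $(q^2-1)(q^2)^{k-2}=(q^2-1)q^{2(r-1)}$ are immediate once one is careful to use $\GF(q^2)$ as the base field and $k=r+1$. The only point that must be verified rather than merely invoked is the non-existence of dictatorial participants, and that reduces to the observation that $\cB$ consists of pairwise distinct projective points, so that the hypothesis of part~(2) of Proposition~\ref{con} holds for every $i\ge 1$. Therefore both the count of minimal access sets and the democratic property follow.
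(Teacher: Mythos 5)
Your proposal is correct and follows essentially the same route as the paper: observe that the columns of a generator matrix of $\cC(\cB)$ represent pairwise distinct projective points (so no dictatorial participants), and then apply Proposition~\ref{con} over $\GF(q^2)$ with $k=r+1$ together with the minimality of $\cC(\cB)$. The only cosmetic difference is that for $r=3$ the paper cites Proposition~\ref{pro3} (since Theorem~\ref{fin} is stated for $r>3$), but the same Ashikhmin--Barg ratio check you invoke covers that case too.
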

\begin{proof}
Let $G$ be a generator matrix of $\cC(B)$ and let $g_i$ denote the $i$-column of $G$. We observe that $\forall i\neq j$ $g_i$ is not a multiple of $g_j$. Thus, the result follows from Theorem~\ref{fin} and Propositions~\ref{pro3} and \ref{con}.
\end{proof}

\begin{remark}
  We observe that if $\cH$ is a quasi-Hermitian variety of $\PG(r,q^2)$, where $r\geq 2$ and $q$ any prime power, then the projective code $\cC(\cH)$ is a two-weight code.   From Theorem~\ref{herm}  we also know that  $\cC(\cH)$ is minimal. In particular, Property 2 of Proposition~\ref{con} applies and the SSS based on the dual of $\cC(\cH)$ turns out to be  democratic. For $q$ odd, the SSS associated to the aforementioned  code has the same number of minimal access sets as the SSS based on $\cC(\cB)^{\perp}$    but it has a different number of  participants, that is $\frac{(q^{r+1}+(-1)^r)(q^{r}-(-1)^r)}{(q^2-1)}-1$ for $r>2$.
\end{remark}

We now present a detailed example of access structure with a rich automorphism
group by considering the case of Hermitian varieties.

\begin{example}\label{Hermitiana}
  Let us consider the Hermitian surface $\mathcal{H}$ of $\PG(3,q^2)$.
  Then, the projective code $\cC(\cH)$ has parameters $[(q^3+1)(q^2+1),4,q^5]$
  and its weight distribution is given by $A_0=1$, $A_{q^5}=(q^4-1)(q^3+1)$
  and $A_{q^5+q^2}=q^8-q^7-q^4+q^3+1$. Also the automorphism group
  $\mathrm{PGU}(4,q)$ of
  $\mathcal{H}$ is transitive on $\mathcal{H}$.
  So, no matter what point $P$ is chosen in $\mathcal{H}$, all access structures
   $\fA(\mathcal{H}^{\perp};P)$ are equivalent.
  Since $\cC(\mathcal{H}^{\perp})$ is a
  $[(q^3+1)(q^2+1),(q^3+1)(q^2+1)-4,3]$ code, in the SSS $\fS(\mathcal{H}^{\perp};P)$
  there are $q^6$ minimal access sets
  and each participant is involved in $q^6-q^4$ of them.
  As seen before, each minimal access set $A\in\fA$ corresponds to a plane $\pi_A$
  not through a fixed point $P_1$ of $\cH$ and, consequently, it has size
  $\mid \cH\setminus\pi_A\mid -1$. In particular, since $\cH$ is a two-intersection
  set with respect to the (hyper)planes, the possible sizes of the minimal access
  sets are $q^5-1$ and $q^5+q^2-1$. Observing that there is
  exactly one plane through $P_1$ meeting $\cH$ in $q^3+q^2+1$ points, while the
  remaining $q^4+q^2$ planes meet $\cH$ in $q^3+1$ points, it is straightforward
  to determine how many access structures of each type there are from the weight enumerator of $\cC(\cH)$.

  We now describe in further detail the structures for $q=2$.
  In this case the size of $\fA$ is $64$ and we can easily see that
  $32$ of its minimal access sets  have size $31$ and  the remaining $32$ are of  size $35$.
  The stabilizer $\Gamma$ of a point $P_1$ in $\mathrm{PGU}(4,2)$
  has size $576$. It has
  $5$ orbits, say $\Omega_i$ (with $i=1,\dots,5$),
  on the hyperplanes of $\PG(3,4)$ of size respectively
  $1, 8, 12, 32$, and $32$, respectively. The union of the first $3$ orbits is the set
  of the hyperplanes through the fixed point $P_1$; the orbits
  $\Omega_4$ and $\Omega_5$ correspond to the families of hyperplanes
  with intersection, respectively, $9$ and $13$ with $\cH$. In turn
  these correspond to the access structures of size $35$ and $31$.

  Denote by $\{P_1,\dots,P_{45}\}$ the points of $\cH$.
  It can be seen that
  $\Gamma$ acts on the $P_i$'s as the permutation group generated by
  \begin{multline*}
    \gamma_1:= (2,44,38,3,42,40)(4,10,28,7,13,30)(5,22,15,8,24,12)
    (6,34,20,9,32,18) \\
    (11,19,29,14,23,33)(16,25,21)(26,31,27)(37,41,45)(39,43),
  \end{multline*}
  \begin{multline*}
    \gamma_2:=(10,12,11)(13,15,14)(16,36,26)(17,45,31)(18,43,32)(19,44,30)
    \\ (20,39,34)(21,37,35)(22,38,33)(23,42,28)(24,40,29)(25,41,27),
  \end{multline*}
  \begin{multline*}
    \gamma_3:=  (4,12)(5,10)(6,11)(7,15)(8,13)(9,14)(16,26)(17,35)(18,33)(19,34)\\
    (20,29)(21,27)(22,28)(23,32)(24,30)(25,31).
  \end{multline*}
  By the previous remarks $\Gamma$
  acts in a natural way on the minimal access sets of
  $\fA$. In particular, $\fA$ is the $\Gamma$-development of the
  following two starters:
  \begin{multline*}
    S_1:=\{2, 3, 4, 5, 6, 7, 8, 13, 14, 15, 16, 17, 18, 19, 20, 21, 22, 26,
    \\ 30, 31,
    32, 33, 34, 35, 36, 37, 38, 39, 43, 44, 45 \},
\end{multline*}
  \begin{multline*}
    S_2:=\{
  2, 3, 4, 5, 6, 7, 8, 10, 11, 12, 13, 14, 17, 18, 19, 20, 21, 22, 23, 24, \\
  25, 26, 27, 28, 29, 33, 34, 35, 36, 40, 41, 42, 43, 44, 45
  \}.
  \end{multline*}
\end{example}
\begin{remark}
  We point out that all access structures arising from Hermitian varieties
  are the $\Gamma$-development of $2$ elements under the action of a
  group $\Gamma$ which is isomorphic to the stabilizer of an isotropic
  point in $\mathrm{PGU}(r+1,q)$.
\end{remark}

\section*{Thanks}
The authors wish to thank M. Bonini and M. Borello for having pointed out
several very recent references on the relationship between minimal linear
codes and projective systems, as
well as for having corrected a misattribution present on the first version of the
preprint.
\section*{Acknowledgements}

This research was carried out within the activities of the GNSAGA - Gruppo Nazionale per le Strutture Algebriche, Geometriche e le loro Applicazioni of the Italian INdAM.

 \end{document}